\DeclareSymbolFont{matha}{OML}{txmi}{m}{it}
\DeclareMathSymbol{\varv}{\mathord}{matha}{118}
\begin{document}
\title{Towards Optimal Energy Efficiency in Cell-Free Massive MIMO Systems}
\author{A. Papazafeiropoulos, H. Q.  Ngo, P. Kourtessis,   S. Chatzinotas, and J. M. Senior \vspace{2mm} \\
%
\thanks{A. Papazafeiropoulos is with the Communications and Intelligent Systems Research Group,
University of Hertfordshire, Hatfield AL10 9AB, U. K., and with SnT at the University of Luxembourg, Luxembourg.  H. Q.  Ngo  is with the School of Electronics, Electrical Engineering and Computer Science, Queen's University Belfast, Belfast BT3 9DT, U.K. P. Kourtessis and John M. Senior are with the  Communications and Intelligent Systems Research Group, University of Hertfordshire, Hatfield AL10 9AB, U. K. S. Chatzinotas is with the SnT at the University of Luxembourg, Luxembourg. 
E-mails: tapapazaf@gmail.com,  hien.ngo@qub.ac.uk, p.kourtessis@herts.ac.uk, symeon.chatzinotas@uni.lu.
}}
\maketitle
\vspace{-12 mm}

 \begin{abstract}
Motivated by the ever-growing demand for \emph{green} wireless communications and the advantages of \emph{cell-free} (CF) massive multiple-input multiple-output (mMIMO) systems, we focus on the design of their downlink (DL) for optimal \emph{energy efficiency} (EE). To address this fundamental topic, we assume that each access point (AP) is deployed with multiple antennas and serves multiple users on the same time-frequency resource while the APs are Poisson point process (PPP) distributed, which approaches realistically their opportunistic spatial randomness. Relied on tools from stochastic geometry, we derive a lower bound on the DL average achievable spectral efficiency (SE). Next, we consider a realistic power consumption model for CF mMIMO systems. These steps enable the formulation of a tractable optimization problem concerning the DL EE, which results in the analytical determination of the optimal pilot reuse factor, the AP density, and the number of AP antennas and users that maximize the EE. 
Hence, we provide useful design guidelines for CF mMIMO systems relating to fundamental system variables towards optimal EE. Among the results, we observe that 
an optimal pilot reuse factor and AP density exist, while larger values result in an increase of the interference, and subsequently, lower EE. Overall, it is shown that the CF mMIMO technology is a promising candidate for next-generation networks achieving simultaneously high SE and EE.
\end{abstract}

\begin{keywords}
Cell-free massive MIMO systems, energy efficiency, stochastic geometry, small cells networks, beyond 5G MIMO.
\end{keywords}

\section{Introduction}
The rapid development of wireless communication systems, by means of the fifth generation (5G) networks and beyond, aim at higher data rates with adequate quality of service (QoS) but  with the reduction of energy consumption being of primary concern~\cite{Andrews2014}. 
Obviously, achieving higher data rates with less power consumption might seem like contradictory goals~\cite{Chen2011}, but that is not necessarily the case. A promising solution to provide higher data rates is achieved by means of the so-called network densification, which, unfortunately,  stumbles at the major bottleneck of increasing interference resulting in higher power consumption~\cite{Nguyen2017}. Hence, the fundamental arising question is how to increase the network data rate while achieving optimal energy efficiency (EE) at the same time.  Although both academia and industry already  have focused on the EE of cellular networks in the past years~\cite{Buzzi2016}, existing architectures cannot face the increasing complexity of future networks towards many devices, many antennas, and many bands. For this reason,  new innovative architectures  are required to address the crucial demanding \emph{green} specifications and considerations in next-generation networks.

In the direction of network densification, a key 5G  technology  (in terms of the number of antennas per area unit), known as massive multiple-input multiple-output (mMIMO) systems, has emerged by providing $10 \times$ higher data rate with comparison to conventional cellular systems~\cite{Marzetta2010,Papazafeiropoulos2015a,massivemimobook,BHS18A}.  
Although mMIMO systems can effectively deal with interference, the achievable EE is limited by the large propagation losses that are typical in cellular networks. An interesting alternative is to distribute a large number of antennas over the coverage area and operate these antennas in a network MIMO manner~\cite{Bjornson2010,Gong2016}\footnote{Despite that network MIMO has attracted  a lot of interest in the last decade~\cite{Bjornson2010,Gong2016}, its implementation is not feasible for practical systems  due to its substantial backhaul overhead.}. A practical embodiment of network MIMO is the cell-free (CF) mMIMO concept described in~\cite{Ngo2017}.

CF mMIMO consist of a deployment of a large number of access points (APs) that are distributed over the coverage area to coherently serve a large number of users on the same time-frequency resource. According to \cite{Ngo2017},   as the number of APs increases, we manage to take advantage of the favorable propagation and channel hardening properties, and finally, achieve very large spectral efficiency (SE) with simplified signal processing needing less overhead. However, herein, it is crucial to mention that the attractive properties of channel hardening and favorable propagation do not hold under all conditions. In particular, despite~\cite{Ngo2017} that accounted for these properties for single-antenna APs, in~\cite{Chen2018}, it was proved the opposite. Fortunately, it was shown that channel hardening and favorable propagation appear in the case of multiple-antennas APs (at least $5-10$ antennas) or low path-loss.   As a result, CF mMIMO can combine the benefits of coordination and low overhead. Moreover, CF mMIMO is a promising architecture because by increasing the number of APs the path-losses are improved and the macro-diversity is enhanced~\cite{Ngo2017}, which means that the transmit powers can be reduced. Unfortunately, these gains from CF mMIMO are achieved by deploying more hardware, which in turn, may increase the power consumption. Notably, even though CF mMIMO systems come with plausible potentials, the study of this technology is limited as literature reveals \cite{Ngo2017,Ngo2018,Chen2018,Bashar2019,Buzzi2017a,Chen2018,Bashar2018,Parida2018,Bjoernson2019a,Alonzo2019,Nayebi2016,Interdonato2019,Interdonato2019a,Bjoernson2020,Papazafeiropoulos2020}. For example,
the authors in~\cite{Buzzi2017a} achieved better data rates by suggesting a user-centric approach of CF mMIMO systems, where the APs serve a group of  users instead of all of them. Another interesting study concerns \cite{Papazafeiropoulos2020}, where the locations of the APs are Poisson point process (PPP) distributed, and the coverage probabilty was derived\footnote{Note that PPP is the most popular and tractable point 	process   for describing the spatial distribution of network nodes, e.g., see \cite{Andrews2011} and relevant works. The consideration of other spatial distributions  such as the Matern Hard-core point process could be the topic of future research.}.

Along the line concerning energy consumption, there are many works that have studied the EE of massive  MIMO~\cite{Bjoernson2016,Ren2017,Pizzo2018}. Specifically, in~\cite{Bjoernson2016}, the optimal  uplink (UL) EE of cellular networks was obtained analytically and examined thoroughly by using tools  from stochastic geometry with  PPP distributed BSs, and in~\cite{Pizzo2018}, the same methodology was applied for a multislope path-loss model. Only a few prior works have examined the  EE in CF mMIMO systems which is of particular interest since they are more beneficial than network MIMO
~\cite{Ngo2018,Nguyen2017a,Alonzo2019,Bashar2019}.  In particular, in the case of CF mMIMO systems, the EE was investigated in~\cite{Ngo2018,Nguyen2017a,Bashar2019} while, in~\cite{Alonzo2019},  the EE was investigated under a user-centric approach at millimeter-wave frequencies, but these works did not obtain analytical expressions for the EE, and did not take spatial randomness into account.  

\subsection{Motivation}
Most existing works on 5G networks focus on the SE while they neglect the importance of EE which is decreased when interference increases.  Network  MIMO, mitigating interference by means of coordination, is practically unattainable due to excessively high complexity in terms of hardware and information overhead. Luckily, CF mMIMO systems are an embodiment of massive MIMO and network MIMO systems and emerge as a promising feasible solution regarding coordination with low overhead exploiting the favorable propagation and channel hardening properties as the number of each AP antennas increases. Hence, the study of EE of CF mMIMO systems is of pivotal interest. 
Despite some existing works on the numerical optimization of the  EE of CF mMIMO systems~\cite{Ngo2018,Nguyen2017a}, there is no previous work deriving the optimal system parameters in closed form.
Most importantly, existing works, except \cite{Parida2018,Papazafeiropoulos2020}, focus on simplified network topologies such as grid-based models, and they do not account for the realistic spatial randomness of the APs\footnote{In \cite{Parida2018},  the spatial randomness of the APs was considered. However, the  distribution of the APs was again idealized and neglected their irregularity  since it was assumed uniform, i.e., a binomial point process (BPP) was applied. Moreover, certain approximations were made that result in a not strict analysis with not reliable expressions. For example, it was made the assumption of the nearest AP  and it was considered the mean contribution from the rest of the APs. Regarding our recent work in \cite{Papazafeiropoulos2020}, it was  relied on the deterministic equivalent (DE) analysis to obtain the DE signal-to-interference-plus-noise ratio (SINR) for a large number of APs. Also, it focused on the derivation of  the coverage probability and achievable rate for a large APs number.
}.  Especially, as the number of APs increases according to the concept of CF mMIMO, 
they are deployed opportunistically which means high irregularity. Although previous works mentioned that  the APs are randomly located, they consider a fixed number of APs while their randomness is not utilized in the analysis, but only in the simulations. These observations suggest that the analytical derivation of the optimal realistic EE of CF mMIMO systems, where the APs are distributed according to a PPP, is of paramount importance. Also, in order to extract trustworthy results, a realistic power consumption
model is needed to take both the transmit power and other system parameters into account.

\subsection{Contribution}
The main contributions are summarized as follows.
\begin{itemize}
  \item Contrary to existing works~\cite{Ngo2018,Nguyen2017a}, which did not account for the spatial randomness of the APs, and thus, are quite idealized, we apply tools from stochastic geometry and assume that the APs are PPP located. 
  In addition, contrary to \cite{Papazafeiropoulos2020}, our analysis  relies on a finite number of APs, and the aim of  this work is the study of the EE.   Also, we differentiate from \cite{Parida2018} that assumed a BPP for the  APs which is again idealistic.
  
 \item We derive a lower bound of the downlink (DL) average achievable SE for a finite number of APs being PPP distributed and having multiple antennas. Furthermore, we present a realistic power consumption model, specialized in CF mMIMO systems.
 
 \item Contrary to the common definition of area SE (ASE) in cellular networks, we provide a novel definition, which is necessary for CF mMIMO systems, and in general, in architectures with coordinated multi-point joint transmission (CoMP-JT).
 
 \item We obtain the optimal EE  of CF mMIMO systems with PPP distributed multiple-antenna APs by means of an analytical expression enabling to derive the optimal values for fundamental system parameters such as the network size in terms of  numbers of AP antennas and serving users.
 
 \item We shed light on the impact of the main system parameters on the optimal EE.  The results are of high practical interest since the analysis accounts for finite and realistic systems dimensions. Specifically, we obtain the optimal reuse factor, the optimal AP density, and the optimal number of AP antennas and users in closed-forms. For the sake of comparison, we also present results for a corresponding ``cellular'' mMIMO system and a small-cells (SCs) network. 
 \end{itemize}

\subsection{Paper Outline}  
The remainder of this paper is organized as follows.  Section~\ref{System} presents the system model of a CF mMIMO system with  multiple antennas APs being PPP distributed.  Sections~\ref{estimation} and~\ref{downlink} provide the UL training and DL transmission phases, respectively. Section~\ref{EnergyEfficiencyAnalysis} provides the analysis regarding the EE while Section~\ref{EnergyEfficiencyMaximization} presents the optimization of the EE and obtains the optimal system parameters in closed form.
The numerical results are placed in Section~\ref{Numerical}, and Section~\ref{Conclusion} concludes the paper.

\subsection{Notation}Vectors and matrices are denoted by boldface lower and upper case symbols, respectively. The symbols $(\cdot)^\T$, $(\cdot)^\H$, and $\tr\!\left( {\cdot} \right)$ express the transpose,  Hermitian  transpose, and trace operators, respectively. The expectation  operator is denoted by $\EE\left[\cdot\right]$.   Also, $\bb \sim \cC\cN{(\b0,\mathbf{\Sigma})}$ represents a circularly symmetric complex Gaussian vector with {zero mean} and covariance matrix $\mathbf{\Sigma}$. Finally, the superscript $^\star$ is used to represent optimal values.

\section{System Model}\label{System}
We consider a CF mMIMO system with multiple antennas at the APs and we model the practical spatial randomness of APs by means of stochastic geometry. Specifically, we assume that the APs, each having  $N\ge 1$  antennas,  are distributed  in the two dimensional Euclidean plane with their locations following a homogeneous PPP $\Phi_{\mathrm{AP}}$ with intensity $\lambda_{\mathrm{AP}}$ $\left[\mathrm{AP}/\mathrm{km}^{2}\right] $. In a specific realization of the PPP  $\Phi_{\mathrm{AP}}$, the number of APs in any  region $ \mathcal{A} $ of size $S$ in  $ \mathrm{km}^{2} $, denoted by $M$,  is a Poisson random variable with mean value 
\begin{align}
 \EE\left[ M\right] =\lambda_{\mathrm{AP}} S.\label{meanValue} 
\end{align}

Following the network MIMO principle, all the APs serve simultaneously all the single-antenna users on the same time-frequency resource{\footnote{Given that our focus is the study of CF mMIMO systems under practical assumptions,  the optimization of their EE by accounting for user-centric and scalable requirements  as in \cite{Buzzi2017a} and \cite{Bjoernson2019a}  is a topic of future research.}
Interestingly, the total number of antennas in $\mathcal{A}$ in a realization of the spatial process, denoted by $\mathcal{W}=MN$ is a Poisson random variable with mean $\EE\left[ \mathcal{W}\right]=N \lambda_{\mathrm{AP}} S$. 
We let $K$ denote the number of users in any given network realization. Their number is fixed and the users are selected at random from a  large set based on  round-robin scheduling\footnote{This choice is equivalent to a random user selection in each time-frequency resource block and  it is  a common assumption in the literature for analytical tractability. The study of the impact of optimal scheduling  is an interesting topic for future research.}.
 Notably, the number of users is  an optimization variable  while their locations are  uniformly distributed~\cite{Bjoernson2016}.
To consider a CF mMIMO scenario, the densities are chosen in order to fulfill the condition $\mathcal{W}\gg K$  in most realizations~\cite{Chen2018}. 

All APs are connected via a perfect fronthaul network to a central processing unit
\footnote{Although the fronthaul links are not perfect in practice, but degraded due to several reasons such as the quantization noise~\cite{Marsch2011,Bashar2018,Parida2018}, this work 	assumes perfect fronthaul connections to focus on the impact of a realistic spatial randomness of the APs. The consideration of the 	fronthaul links limitations is  of practical interest and is left for future work.}. Taking advantage of Slivnyak's theorem, we focus on a typical user, selected at random among the  users and indexed by $k$, in order to analyze the network performance~\cite{Chiu2013a}. For the ease of exposition, we assume that the typical user is located at the origin.

\subsection{Channel Model}\label{ChannelModel} 
In a realization of the PPP  $\Phi_{\mathrm{AP}}$, i.e., given $M$, let the $N \times 1$ channel  vector  $\bh_{mk}$ between the $m$th AP and the typical user be given by
\begin{align}
\bh_{mk}=l_{mk}^{1/2}\bg_{mk}, ~~~~~~m=1,\ldots,M~\mathrm{and}~k=1,\ldots,K
\end{align}
where $l_{mk}=\min\left( 1, r_{mk}^{-\al}\right)$ and $\bg_{mk}$ represent  independent path-loss and small-scale fading between the $m$th $N$ antenna AP and the typical user. In particular, the path-loss is described by means of a non-singular  bounded model with $\al>0$ being the path-loss exponent and $r_{mk}$ being the distance  between the $m$th AP and the $k$th  user~\cite{haenggi2009interference}.
Note that this bounded path-loss model is practical also at short distances while  an  bounded path-loss model is not suitable for the study  of CF mMIMO systems with stochastic geometry because it might result in unrealistically high power gain if the APs arbitrarily close to the user~\cite{Chen2018}. Given that this work accounts for the spatial randomness of the APs, the following analysis is dependent on the selection of the path-loss model. Although the majority of CF mMIMO works such as \cite{Ngo2017,Ngo2018} have considered another path-loss model, herein, for the sake of clarity and simplicity, we have considered a famous bounded path-loss model that will result in tractable expressions. Note that the three-slope path-loss model would be too complicated for the analysis. Also, both models provide similar insights regarding the parameters of the system under study in this work. The same reasons have contributed to the wide acceptance of the bounded model  in many  scenarios  modeled in terms of  stochastic geometry \cite{haenggi2009interference}. In addition, both types of distances, i.e., the distance between the $m$th AP located at $\bx_m$ in $\mathbb{R}^{2}$ and the typical user as well as the distances between the $m$th AP and the other users in $\mathcal{A}\backslash \{\bx_{m} \in \mathcal{A} \}$ 
follow the uniform distribution and are independent. Also, similar to other works on CF mMIMO systems, e.g.,~\cite{Ngo2017,Chen2018, Ngo2018,Nguyen2017a,Bashar2019}, we assume independent uncorrelated Rayleigh fading where the elements of $\bg_{mk}$ are independent and identically distributed (i.i.d.) $\mathcal{CN}\left( 0,1 \right)$ random variables. Note that this  assumption of uncorrelated channels is reasonable, since   the service antennas (APs) in CF  mMIMO systems are distributed over a large area and the AP antennas can be well separated. Hence, the set of scatterers is likely to be different for each AP and each user.

We consider a time-varying narrowband channel that is divided into coherence blocks, which are blocks of duration $T_{\mathrm{c}}$ in $\mathrm{s}$ and bandwidth $B_{\mathrm{c}}$ in $\mathrm{Hz}$ while the channels are fixed and frequency-flat. Each coherence block consists of $\tau_\mathrm{c}=B_{\mathrm{c}}T_{\mathrm{c}}$ samples (channel uses) and we follow the standard block fading model where independent channel realizations appear in every block \cite{massivemimobook}.
We employ the  time-division-duplex (TDD) protocol with an UL training phase of $\tau_{\mathrm{tr}}$ samples and two data transmission phases of $\tau_{\mathrm{d}}$ (DL) and $\tau_{\mathrm{up}}$ (UL) samples, respectively. Hence, we have $\tau_\mathrm{c}=\tau_{\mathrm{tr}}+\tau_{\mathrm{up}}+\tau_{\mathrm{d}}$ while the communication strategy is illustrated in Fig.~\ref{Fig0}.  In this work, we focus on the  UL training and DL data transmission phases. The duration of the latter can be expressed by $\tau_{\mathrm{d}}=\xi\left(\tau_\mathrm{c}-  \tau_{\mathrm{tr}}\right)$ with $\xi \le 1$, where $ \xi $ expresses the DL payload fraction transmission~\cite{Pizzo2018}.

 \begin{figure}[!h]
	\begin{center}
		\includegraphics[width=0.5\linewidth]{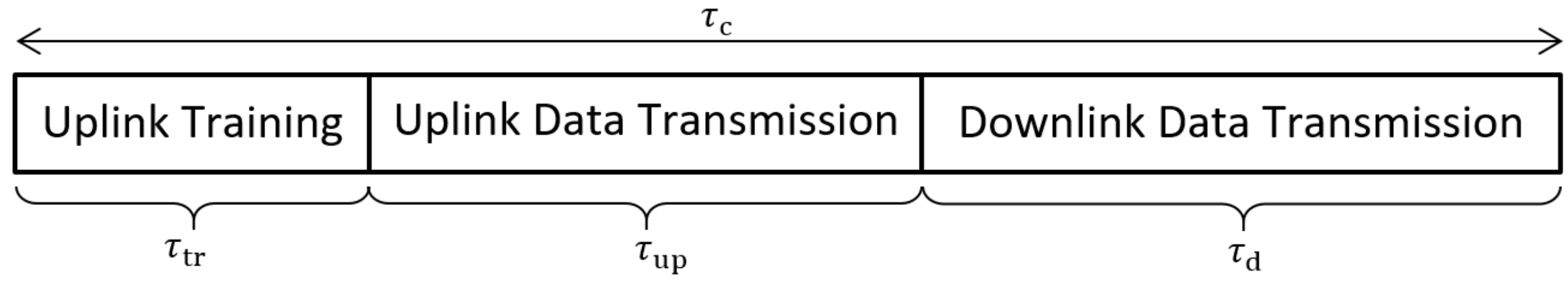}\vskip -5mm
		\caption{\footnotesize{The TDD transmission strategy.}}
		\label{Fig0}
	\end{center}
\end{figure}

\section{UL Channel Estimation}\label{estimation}
The construction of the precoder for the DL transmission requires the channel state information, which is obtained from the UL training phase. The $\tau_{\mathrm{tr}}$ channel uses for UL training need to be shared among all the users and there is room for $\tau_{\mathrm{tr}}$ mutually orthogonal pilot sequences. Since $K \gg \tau_{\mathrm{tr}}$ in most cases of interest, there will be pilot contamination. By introducing the reuse factor $\zeta=K/\tau_{\mathrm{tr}}$, we note that $ \zeta $ users share the same pilot sequences. 

In the training phase of one realization of the network, the $k$th user transmits a normalized pilot sequence  $\bpsi_{k}\in \mathbb{C}^{ \tau_{\mathrm{tr}}\times 1}$  with  $\|\bpsi_{k}\|^{2}=1$, and the received  $N \times \tau_{\mathrm{tr}}$ channel vector by the $m$th AP  is given by
\begin{align}
\!\!\!\tilde{\bY}_{m}^{\mathrm{tr}}&
\!= \! \sum_{i=1}^{K}\sqrt{\tau_{\mathrm{tr}} \rho_{\mathrm{tr}}}l_{mi}^{1/2}\bg_{mi}\bpsi_{i}^{\H}\!+\!\bn_{m}^{\mathrm{tr}},\label{eq:Ypt2}
\end{align}
where  $\rho_{\mathrm{tr}}$ is the  average transmit power while $\bn_{m}^{\mathrm{tr}}$ is the $N\times \tau_{\mathrm{tr}}$  additive noise vector at the $m$th AP consisted of i.i.d. $\mathcal{CN}\left( 0,1\right)$ random variables. In other words,  $\rho_{\mathrm{tr}}$  is actually the normalized signal-to-noise ratio (SNR). By projecting $\tilde{\by}_{mk}^{\mathrm{tr}}$ onto $\frac{1}{\sqrt{\tau_{\mathrm{tr}} \rho_{\mathrm{tr}}}}\bpsi_{k}$, we obtain
\begin{align}
\tilde{\by}_{mk}=&  \bg_{mk}l_{mk}^{1/2}\!+\! \sum_{i\ne k}^{K}l_{mi}^{1/2}\bg_{mi}\bpsi_{i}^{\H}\bpsi_{k}\!+\!\frac{1}{\sqrt{\tau_{\mathrm{tr}} \rho_{\mathrm{tr}}}}\bn_{m}^{\mathrm{tr}}\bpsi_{k}.\label{eq:Ypt3}
\end{align}

 With the assumption that the channel and   distances   statistics  are known a priori and that $\bpsi_{i}^{\H}\bpsi_{k} \in \{0, 1\}$ for all $i,k$, the $m$th AP obtains the linear minimum mean-squared error (MMSE) estimate according to \cite{Verdu1998}, i.e., $ \hat{\bh}_{mk}\!=\!{\mathrm{E}\!\left[\bh_{mk}\tilde{\by}_{mk}^{\H}  \right]}{\mathrm{E}^{-1}\!\left[\tilde{\by}_{mk}\tilde{\by}_{mk}^{\H}\right]}\tilde{\by}_{mk}$. Thus, we have
\begin{align}
 \hat{\bh}_{mk}
 &=\frac{{l_{mk}}}{\sum_{i=1}^{K}|\bpsi_{i}\bpsi_{k}^{\H}|^{2}l_{mi}+\frac{1}{{{{\tau_{\mathrm{tr}} \rho_{\mathrm{tr}}}}}}}\tilde{\by}_{mk}.\label{estimatedChannel1} 
\end{align}


The  estimation error vector $\tilde{\bee}_{mk}={\bh}_{mk}-\hat{\bh}_{mk}$ is independent of  $\hat{\bh}_{mk}$. Moreover, it follows that $\bh_{mk}\in\mathbb{C}^{{N}\times 1}\sim\mathcal{CN}\left( \b0, l_{mk}\Id_{N}\right)$, $\hat{\bh}_{mk}\in\mathbb{C}^{{N}\times 1}\sim\mathcal{CN}\left( \b0, \sigma_{mk}^{2}\Id_{N}\right)$ and $\tilde{\bee}_{k}\in\mathbb{C}^{{N}\times 1}\sim\mathcal{CN}\left( \b0,\tilde{\sigma}_{mk}^{2}\Id_{N}\right)$, where $ \sigma_{mk}^{2}=\frac{l_{mk}^{2}}{ d_{m}}$ and  $\tilde{\sigma}_{mk}^{2}\!=\!l_{mk}\left( 1-\frac{l_{mk}}{ d_{m}} \right)$ with $d_{m}\!=\!\left( \sum_{i=1}^{K}|\!\bpsi_{i}^{\H}\bpsi_{k}|^{2}l_{mi}\!+\!\frac{1}{{\tau_{\mathrm{tr}} \rho_{\mathrm{tr}}}} \right)$. 

\section{DL Transmission}\label{downlink} 
We now consider the DL transmission in one realization of the network, where the APs have multiple antennas and are PPP distributed. The goal is to derive the achievable SE with conjugate beamforming, taking into account the effect of pilot contamination as well as the spatial randomness of the APs.  "The choice of conjugate beamforming relies on its indication for  distributed architectures  due to no need for CSI exchange among the APs and the central unit~\cite{Ngo2017}. Also, 
conjugate beamforming  performs well in both CF mMIMO systems and SCs, and it provides the derivation of closed-form tractable expressions\footnote{Notably, the impact of  zero-forcing (zero-forcing) in terms of analytical closed-form results is the topic of ongoing research. Therein, the more robust regularized ZF, which  is indicated for better performance, is also studied but in terms of Monte-Carlo (MC) simulations since it does not provide closed-form expressions.}.

Note that although the choice of zero-forcing (zero-forcing) or regularized ZF  is indicated for better performance, their applications would not allow the derivation of any tractable closed-form expressions, which is one of the main contributions of this work.

The received signal by the typical user is given by
\begin{align}
 y^{\mathrm{d}}_{k}&=\sqrt{ \rho_{\mathrm{d}}}\sum_{i\in \Phi_{\mathrm{AP}}}\tilde{\bh}_{i}^{\H} \bs_{i}+z^{\mathrm{d}}_{k}\label{signal} \\
&=\sqrt{ \rho_{\mathrm{d}}} \sum_{m=1}^{M}\bh_{mk}^{\H}  \bs_{m}+z^{\mathrm{d}}_{k}.\label{signal1}
\end{align}

In~\eqref{signal}, the vector $\tilde{\bh}_{i}$ describes the  channel between the $i$th AP located at $\bx_{i}\in \mathbb{R}^{2}$ and the typical user including  small-scale fading and path-loss, $\rho_{\mathrm{d}} >0$ denotes the corresponding transmit power,  while  $\bs_{i}$ is  the transmitted signal from the $i$th AP, and $z^{\mathrm{d}}_{k}\sim \mathcal{CN}\left( 0,1 \right)$ is the additive white Gaussian noise at the $k$th user.  Since a realization of the system includes $M$ APs, the signal model described by~\eqref{signal} can be written as in~\eqref{signal1}. Notably,  the number $M$ is a random variable changing in every spatial realization of the APs. In~\eqref{signal1}, the vector $\bh_{mk}$ expresses the channel between the $m$th AP and the typical user while  $\bs_{m}$ is the transmit signal from the $m$th  AP, which is written as
\begin{align}
\bs_{m}= \sum_{k=1}^{K} \sqrt{ \eta_{mk} }\bff_{mk}q_{k},\label{ZF}
\end{align}
where $q_{k}\in \mathbb{C}$ is  the normalized transmit data symbol for  user $k$ satisfying $\EE\left[ |q_{k}|^{2}\right]=1 $.
The vector $\bff_{mk}=\hat{\bh}_{mk}\in \mathbb{C}^{N}$ expresses the linear precoder.
Also, we denote $\eta_{mk}=\mu\sigma_{mk}^{-4}$ with $ \mu $  obtained by means of the constraint of the   transmit power   $\mathbb{E}\left[\frac{ \rho_{\mathrm{d}}}{K}\bs_m\bs_m^{\H}\right]=\rho_{\mathrm{d}}$. This selection regarding $ \eta_{mk} $ aims at easing the following algebraic manipulations. Actually, it corresponds to a statistical channel inversion power-control policy~\cite{Bjoernson2016}. It allows each AP to allocate more power to the most distant users and less power to the closest ones. Note that the scaling does not result in any loss in the performance since the parameter $\mu$ is changed accordingly. 

Henceforth, for the sake of algebraic manipulations, we   denote  $\bh_{k} = [\bh_{1k}^T \cdots \bh_{Mk}^T]\sim\mathcal{CN}\left( \b0, \bL_{k}\right)$, $\hat{\bh}_{k}= [\hat{\bh}_{1k}^T \cdots \hat{\bh}_{Mk}^T]\sim\mathcal{CN}\left( \b0, \bPhi_{k}\right)$ and $\tilde{\bee}_{k}\in\mathbb{C}^{\mathcal{W}\times 1}\sim\mathcal{CN}\left( \b0,\bL_{k}-\bPhi_{k}\right)$. The matrices $\bL_{k} \in\mathbb{C}^{\mathcal{W}\times \mathcal{W}}$,   $\bPhi_{k}= \bL_{k}^{2}\bD^{-1} \in\mathbb{C}^{\mathcal{W}\times \mathcal{W}}$,  and $\bD\in\mathbb{C}^{\mathcal{W}\times \mathcal{W}}$  are block diagonal matrices with  elements given by the matrices $\left[ \bL_{k}\right]_{ww}=l_{mk}\Id_{N}$,  $\left[ \bPhi_{k}\right] _{ww}=\sigma_{mk}^{2}\Id_{N}$, $\left[ \bD\right]_{ww}=d_{m}\Id_{N}$, and $\left[ \bD\right]_{ww}=d_{m}\Id_{N}$, respectivetly, for $w=1,\ldots,\mathcal{W}$ and $\mathcal{W}=MN$.  We also define $\bC_{k}=\bPhi_{k}^{-1}$ with  $\left[ \bC_{k}\right]_{ww}=c_{mk}\Id_{N}$, where $c_{mk}=\sigma_{mk}^{-2}$.

After substituting~\eqref{ZF} into~\eqref{signal1}, the received signal by the typical user is given by
\begin{align}
&y^{\mathrm{d}}_{k}
  \!=\!\sqrt{  \rho_{\mathrm{d}} } \bigg(\EE\!\left[  \sum_{m=1}^{M}\eta_{mk}^{1/2}
   \bh_{mk}^{\H}\hat{\bh}_{mk}\right]\!q_{k}\!+\!  \sum_{m=1}^{M}\eta_{mk}^{1/2}{\bh}_{mk}^{\H}\hat{\bh}_{mk}q_{k}\!\nn\\
   &-\! \EE\!\left[ \sum_{m=1}^{M}\eta_{mk}^{1/2} {\bh}_{mk}^{\H}\hat{\bh}_{mk}\right]q_{k}\!+\! \sum_{i\ne k}^{K}\sum_{m=1}^{M}\eta_{mi}^{1/2}{\bh}_{mk}^{\H}\hat{\bh}_{mi}q_{i}\bigg)\!+\! z^{\mathrm{d}}_{k},  \label{filteredsignal}
\end{align}
where we have written 
 \eqref{filteredsignal} similar to~\cite{Medard2000}, in order to derive the SINR based on the fact that the users do not have any knowledge of the  instantaneous CSI,  but they are aware of its statistics
\footnote{Although, in general, channel hardening does not appear in CF mMIMO systems with single-antenna APs according to~\cite{Chen2018} and~\eqref{filteredsignal}, we exploit its property because  the proposed model considers multi-antenna APs. In fact, in~\cite{Chen2018}, it was shown that if $N\ge 5$, channel hardening is met. Taking into account for this limitation, numerical results, provided in Section~\ref{Numerical}, reveal  that  values of interest regarding $N$ is more than $10$,  which corroborates our analysis taking advantage of channel hardening.}. Note that the second term in \eqref{filteredsignal} expresses the desired signal while the fourth term describes the multi-user interference. By applying the well-established bounding technique in~\cite{Medard2000},  we consider that~\eqref{filteredsignal} represents a single-input single-output (SISO) system, where  the APs treat the unknown terms as uncorrelated additive noise. Thus, we obtain the effective SINR of the DL transmission from all the multi-antenna  APs to the typical user, conditioned on the number of APs and their distances from the users, as
\begin{align}
\!\bar{\gamma}_{k}=\frac{\Big|\EE\left[
	\bh_{k}^{\H}\bC_{k}\hat{\bh}_{k}\right]\Big|^{2}}{\sum_{i=1}^{K}\!\EE\!\left[ \Big|
	\bh_{k}^{\H}\bC_{i}\hat{\bh}_{i}\Big|^{2}\right]- \Big|\EE\!\left[
	\bh_{k}^{\H}\bC_{k}\hat{\bh}_{k}\right]\!\!\Big|^{2} \!+\frac{1}{{\mu p_{\mathrm{d}}}}}.\label{SINR15} 
\end{align} 
Notably,  the matrices in~\eqref{SINR15} are random because they include the number of APs and the distances between the APs and the users , being  random variables changing in each realization.

\begin{proposition}\label{PropDetSINR} 
Given a realization of the network with $M$ APs and $K$ users, the effective SINR of the DL transmission 
at the typical user 
 in a CF mMIMO system, accounting for pilot contamination and  conjugate beamforming, is given by \eqref{SINRFinite}.
\begin{figure*}
\begin{align}
 \bar{\gamma}_{k}=\frac{M^{2}N}{\sum_{i=1}^{K} \tr \left(\bC_{i}\left(N \bL_{k}+\frac{1}{K p_{\mathrm{d}}}\Id_{M}\right)\right)+N\sum_{i \ne k}^{K}\tr ^{2}\left( \bL_{k}\bL_{i}^{-1}\right)-N\tr\left( \bD\bL_{k}^{-1}\right)+M}.\label{SINRFinite} 
\end{align}
	\hrulefill
\end{figure*}
\end{proposition}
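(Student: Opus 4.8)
The plan is to evaluate the three ingredients of the use-and-forget SINR~\eqref{SINR15} --- the squared mean of the effective channel in the numerator, the total received power $\sum_{i=1}^{K}\EE\left[\left|\bh_{k}^{\H}\bC_{i}\hat{\bh}_{i}\right|^{2}\right]$, and the power-normalization term $1/(\mu p_{\mathrm{d}})$ --- and then collapse everything into the trace form of~\eqref{SINRFinite}. The two structural facts I would lean on throughout are: (i) the MMSE decomposition $\bh_{k}=\hat{\bh}_{k}+\tilde{\bee}_{k}$ with $\tilde{\bee}_{k}$ zero-mean and independent of $\hat{\bh}_{k}$; and (ii) the block-diagonal structure of $\bL_{k},\bPhi_{k},\bC_{k},\bD$, so that the $N\times N$ per-AP blocks are mutually independent and every quadratic form splits as $\bh_{k}^{\H}\bC_{i}\hat{\bh}_{i}=\sum_{m=1}^{M}c_{mi}\,\bh_{mk}^{\H}\hat{\bh}_{mi}$.

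For the numerator I would substitute the MMSE decomposition: since $\tilde{\bee}_{k}$ is independent of and orthogonal to $\hat{\bh}_{k}$, the error contribution vanishes in the mean and $\EE\left[\bh_{k}^{\H}\bC_{k}\hat{\bh}_{k}\right]=\EE\left[\hat{\bh}_{k}^{\H}\bC_{k}\hat{\bh}_{k}\right]=\tr\left(\bC_{k}\bPhi_{k}\right)$. Because $\bC_{k}=\bPhi_{k}^{-1}$, this collapses to $\tr\left(\Id_{\mathcal{W}}\right)=\mathcal{W}=MN$, and squaring (with the per-antenna normalization carried along) produces the $M^{2}$-scaling in the numerator of~\eqref{SINRFinite}.

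The denominator is where the work sits. I would expand $\left|\sum_{m}c_{mi}\bh_{mk}^{\H}\hat{\bh}_{mi}\right|^{2}$ and split the double sum into the independent cross-AP part ($m\neq n$), which factorizes into a product of first moments, and the diagonal part ($m=n$), handled by the complex-Gaussian fourth moments $\EE\left[\|\hat{\bh}_{mk}\|^{4}\right]=N(N+1)\sigma_{mk}^{4}$ together with $\EE\left[|\tilde{\bee}_{mk}^{\H}\hat{\bh}_{mk}|^{2}\right]=N\tilde{\sigma}_{mk}^{2}\sigma_{mk}^{2}$. I would then treat three cases for the index $i$: for $i=k$ the diagonal piece yields the desired term plus a channel-hardening residual; for $i\neq k$ using a pilot orthogonal to $k$'s, the independence of $\bh_{mk}$ and $\hat{\bh}_{mi}$ kills the cross-AP terms and leaves a $\tr\left(\bC_{i}\bL_{k}\right)$-type contribution; and for $i\neq k$ sharing $k$'s pilot, the key identity $\hat{\bh}_{mi}=\left(l_{mi}/l_{mk}\right)\hat{\bh}_{mk}$ (both estimates come from the same projected observation) makes the cross-AP terms add coherently and generates the pilot-contamination term proportional to $\tr^{2}\left(\bL_{k}\bL_{i}^{-1}\right)$.

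Finally I would fix $\mu$ from the transmit-power constraint on $\bs_{m}$, which turns $1/(\mu p_{\mathrm{d}})$ into the $\tfrac{1}{Kp_{\mathrm{d}}}\sum_{i}\tr\left(\bC_{i}\right)$ term, subtract the squared mean from the $i=k$ diagonal block, and assemble. The simplification uses the block identity $\tr\left(\bC_{k}\bL_{k}\right)=\tr\left(\bD\bL_{k}^{-1}\right)$ (both block-diagonal with blocks $(d_{m}/l_{mk})\Id_{N}$) to fold the $i=k$ residual into the unified sum $\sum_{i=1}^{K}\tr\left(\bC_{i}\left(N\bL_{k}+\tfrac{1}{Kp_{\mathrm{d}}}\Id\right)\right)$, leaving the correction $-N\tr\left(\bD\bL_{k}^{-1}\right)$ and the additive $M$. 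I expect the main obstacle to be precisely this last bookkeeping step: keeping the per-AP ($N\times N$ block) factors of $N$ consistent between the coherent pilot-contamination term and the incoherent interference/noise terms, and matching them against the normalization constant $\mu$, since that is exactly where stray factors of $N$ are easy to misplace.
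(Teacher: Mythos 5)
Your plan follows essentially the same route as the paper's Appendix~A: evaluate the mean of the effective channel, compute the second moments via the MMSE split $\bh_k=\hat{\bh}_k+\tilde{\bee}_k$ together with complex-Gaussian fourth moments (the paper invokes a standard Gaussian moment lemma where you would compute them by hand --- equivalent), fix $\mu$ from the power constraint so that $\tfrac{1}{\mu p_{\mathrm{d}}}$ becomes $\tfrac{N}{Kp_{\mathrm{d}}}\sum_{i}\tr\bC_{i}$, and assemble using $\bC_{k}\bL_{k}=\bD\bL_{k}^{-1}$. The individual identities you quote (the fourth moments, the contamination relation $\hat{\bh}_{mi}=(l_{mi}/l_{mk})\hat{\bh}_{mk}$, the folding of the $i=k$ term into the unified sum) are all correct, and your worry about stray factors of $N$ is well placed: the paper itself switches between $\mathcal{W}\times\mathcal{W}$ and $M\times M$ trace conventions between its appendix and \eqref{SINRFinite}.

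There are, however, two places where your (more careful) route lands on a formula that is not literally \eqref{SINRFinite}, and you would need to adopt the paper's conventions to close the gap. First, you split $i\neq k$ into pilot-sharing and orthogonal-pilot interferers, so your coherent term $\tr^{2}\left(\bL_{k}\bL_{i}^{-1}\right)$ would be summed only over the contaminating set; \eqref{SINRFinite} sums it over \emph{all} $i\neq k$, because the paper applies the identity $\hat{\bh}_{k}=\bL_{k}\bL_{i}^{-1}\hat{\bh}_{i}$ to every interferer without distinguishing cases. Second, for $i=k$ the paper silently drops $\EE\left[\lvert\tilde{\bee}_{k}^{\H}\bC_{k}\hat{\bh}_{k}\rvert^{2}\right]$, so the residual after subtracting the squared mean is exactly $MN$; that is precisely what generates the $-N\tr\left(\bD\bL_{k}^{-1}\right)+M$ pair in \eqref{SINRFinite}. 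If you keep that error term (as your list of moments suggests you would), the $i=k$ residual becomes $\tr\left(\bC_{k}\bL_{k}\right)$ instead, the correction pair disappears, and you obtain a clean unified sum with no $+M$. Neither choice is a flaw in your reasoning, but only the paper's choices reproduce \eqref{SINRFinite} verbatim, so you should state explicitly which convention you are following.
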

\begin{proof}
See Appendix~\ref{SINRproofDistances}.
\end{proof}
\begin{remark}
The scaling in the numerator with $N$ corresponds to the array gain resulting from the coherent transmission of the $N$ antennas per AP. Moreover, the summations in the denominator are  over the number of users $K$ because as their number increases, the interference increases.
\end{remark}

\section{EE Analysis}\label{EnergyEfficiencyAnalysis}
In this section, we provide the definition  of the EE  of CF mMIMO systems where the APs locations follow a PPP distribution. 
Specifically, we focus on the  analytical derivation of the DL  EE  by first obtaining a lower bound on the average ASE, and then,  presenting a realistic power consumption model.   The power consumption is expected to increase rapidly with the number of APs, i.e., their density. Hence, it is of paramount importance to quantify the relevant efficiency of a CF mMIMO system. 
\begin{definition}
The EE, denoted by $ {\mathrm{EE}} $, expresses  the amount of reliably transmitted information  per unit of energy, which is defined mathematically as
\begin{align}
{\mathrm{EE}}\left[ \mathrm{bit}/\mathrm{Joule}\right]
 &=\frac{B_{\mathrm{w}}\left[ \mathrm{Hz}\right]  \cdot {\mathrm{ASE}}~\left[\mathrm{bit}/\mathrm{s}\Big/\!\!\left(\mathrm{Hz}\cdot \mathrm{km}^{2}\right)\right] }{\mathrm{APC\left[ \mathrm{W}/\mathrm{km}^{2}\right] }},\label{DefMaximization}
\end{align}
where $B_{\mathrm{w}} $, $ {\mathrm{ASE}} $, and APC describe the transmission bandwidth, the ASE, and the  area power consumption (APC), respectively.
\end{definition}

We continue with the derivations of  ASE and APC.

\subsection{Area Spectral Efficiency}\label{AreaSpectralEfficiency}
Taking advantage of the property of the typical user, stating that it is statistically equivalent with any other user in the network, the ASE is provided by
\begin{align}
{\mathrm{ASE}}=K^{'} \bar{R}~~~~\left[\mathrm{bit}/\mathrm{s}\Big/\!\!\left(\mathrm{Hz}\cdot \mathrm{km}^{2}\right)\right],\label{tse}
\end{align}
where $ K^{'}=K /S$ is the number of users per area $ S$,  and   $\bar{R}$ is the average SE per user.  We have  $\bar{R}=\bar{R}_{k}$, where $ \bar{R}_{k} $, provided below, is  the average DL SE of user $k$, being statistical equivalent with any other user in the network.

\begin{remark} 		Contrary to the common definition for the  ASE  in cellular systems with no cooperation \cite{Bjoernson2016,Pizzo2018}, the CF mMIMO architecture necessitates  a new definition.  		Specifically, in CF mMIMO systems each user receives joint transmission from multiple sources (APs), and the received SINR at the user is obtained from the sum of received signals from all these serving APs. Therefore, this received SINR is not the same as the received SINR computed in a single BS association network. Consequently, the  definition of ASE \cite{Bjoernson2016}, where the received user rate (i.e., per transmission link rate) is multiplied with the AP density does not hold in this scenario.  \end{remark}

Since the DL capacity for this network including imperfect CSI in not known, we follow the common approach, especially in the area of mMIMO~\cite{Marzetta2010,Bjoernson2016a},  focusing on the derivation of achievable lower bounds on the ergodic capacity. In particular, the following lemma provides a tractable lower bound on the ergodic capacity for any given realization of $\Phi_{\mathrm{AP}}$.
\begin{lemma}[\cite{Hoydis2013}]\label{lemma1} 
A lower bound on the DL ergodic channel capacity of the typical user $k$ in a CF mMIMO system with conjugate beamforming and PPP distributed APs   for any given realization of $\Phi_{\mathrm{AP}}$  is provided by
\begin{align}
 \bar{R}_{k}=\left( 1-\frac{ K}{ \zeta \tau_{c}} \right)\log_{2} \left( 1+ {\bar{\gamma}}_{k}\right) ~~~~~~\mathrm{[b/s/Hz]},\label{Ratebar} 
\end{align}
where  $K$ is the number of users,  $\zeta$ is the pilot reuse factor, and $\tau_{c}$ is the channel coherence interval in number of
samples while ${\bar{\gamma}}_{k}$ is given by~\eqref{SINRFinite}. 
\end{lemma}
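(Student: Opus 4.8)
The plan is to prove the two factors in \eqref{Ratebar} separately: the pre-log penalty $(1-K/(\zeta\tau_c))$ coming from the training overhead, and the per-sample achievable rate $\log_2(1+\bar\gamma_k)$ coming from the worst-case uncorrelated-noise (``use-and-then-forget'') argument of \cite{Medard2000,Hoydis2013}. Since $\zeta=K/\tau_{\mathrm{tr}}$, we have $K/\zeta=\tau_{\mathrm{tr}}$, so $(1-K/(\zeta\tau_c))=(\tau_c-\tau_{\mathrm{tr}})/\tau_c$ is exactly the fraction of each coherence block left after the $\tau_{\mathrm{tr}}$ pilot symbols are spent; this settles the first factor and needs nothing beyond the definition of the reuse factor.

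The substance is the second factor, which I would establish conditionally on a fixed realization of $\Phi_{\mathrm{AP}}$ (i.e., given $M$ and the AP--user distances), averaging only over the small-scale fading. Starting from the decomposition \eqref{filteredsignal}, I would isolate the four constituents of $y_k^{\mathrm{d}}$: (i) the deterministic ``mean effective gain'' term $\sqrt{\rho_{\mathrm{d}}}\,\EE[\sum_m \eta_{mk}^{1/2}\bh_{mk}^{\H}\hat\bh_{mk}]q_k$, which the user can decode using only channel statistics; (ii) the beamforming-gain uncertainty, i.e. $\sum_m \eta_{mk}^{1/2}\bh_{mk}^{\H}\hat\bh_{mk}q_k$ minus its mean; (iii) the multi-user interference $\sum_{i\ne k}$; and (iv) the receiver noise $z_k^{\mathrm{d}}$. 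The lemma I would then invoke is that, treating $q_k$ as a Gaussian input and lumping (ii)--(iv) into an effective noise that is uncorrelated with $q_k$, the worst-case (Gaussian) such noise minimizes the mutual information, yielding the achievable rate $\log_2(1+\bar\gamma_k)$ with $\bar\gamma_k$ equal to the ratio of the power of (i) to the total variance of (ii)--(iv); this is precisely \eqref{SINR15}.

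Before applying that bound I would verify its hypotheses, namely that (i)--(iv) are mutually uncorrelated: this follows from the independence and zero mean of the data symbols $q_i$ across users (so (iii) is uncorrelated with the constant-times-$q_k$ term (i)), from the fact that (ii) is a zero-mean deviation from the conditional mean and is therefore uncorrelated with the deterministic factor (i), and from the independence of $z_k^{\mathrm{d}}$ from all channel and data variables. I would then evaluate the variances by expanding the second moments of the quadratic forms $\bh_k^{\H}\bC_i\hat\bh_i$, using the Gaussianity of $\bg_{mk}$, the independence $\tilde\bee_{mk}\perp\hat\bh_{mk}$, and the block-diagonal structure of $\bL_k,\bPhi_k,\bC_k$; I expect this to be the most laborious step, and it is exactly the reduction of \eqref{SINR15} to the closed form \eqref{SINRFinite} already carried out in Proposition~\ref{PropDetSINR}. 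The main conceptual obstacle is applying the worst-case-noise principle correctly in the \emph{conditional} setting: one must keep in mind that the ``known'' coefficient (i) is the expectation over fading for the \emph{given} spatial realization, so that $\log_2(1+\bar\gamma_k)$ is a per-realization ergodic-rate bound with the averaging over $\Phi_{\mathrm{AP}}$ deferred to the later ASE computation. Multiplying this per-sample bound by the pre-log factor established above yields \eqref{Ratebar}, and substituting Proposition~\ref{PropDetSINR} for $\bar\gamma_k$ completes the proof.
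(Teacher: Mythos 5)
Your proposal is correct and follows essentially the same route as the paper, which states Lemma~\ref{lemma1} as a cited result from~\cite{Hoydis2013} and supports it by exactly the machinery you describe: the decomposition~\eqref{filteredsignal}, the worst-case uncorrelated additive-noise bound of~\cite{Medard2000} yielding the effective SINR~\eqref{SINR15} conditioned on the spatial realization, and the pre-log factor $1-K/(\zeta\tau_c)=(\tau_c-\tau_{\mathrm{tr}})/\tau_c$ from the training overhead. Your additional care in verifying the mutual uncorrelatedness of the four signal components and in noting that the averaging over $\Phi_{\mathrm{AP}}$ is deferred to Theorem~\ref{PropDetSINRDistances2} is consistent with, and slightly more explicit than, the paper's presentation.
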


The average SE per user is obtained by applying the expectation at~\eqref{Ratebar} over the APs locations. We resort to  Jensen's inequality to derive a closed-form lower bound for the DL achievable $R_{k}$ and avoid intractable lengthy numerical integral evaluations with respect to the APs distances.

\begin{theorem}\label{PropDetSINRDistances2} 
A lower bound on the DL average SE per user with conjugate beamforming precoding in a CF mMIMO system with  multi-antenna APs is obtained by
\begin{align}
 {R}_{k}=\left( 1-\frac{ K}{\zeta \tau_{c}} \right) \log_{2} \left( 1+ {\gamma}_{k}\right) ~~~~~~\mathrm{b/s/Hz},\label{Ratebar1} 
\end{align}
where  ${\gamma}_{k}=1/\check{\gamma}_{k}$ with $\check{\gamma}_{k}$ given by
\begin{align}
 \check{\gamma}_{k}\!&=\!
 \sum_{j=1}^{K}|\psi_{j}\psi_{k}^{\H}|^{2}\!\!\left( \!\frac{ \al \!-\!2  }{\al\pi N p_{\mathrm{d}} }\!+\!K\!-\!1\!\! \right)\!\nn\\
 &+\! \frac{\zeta}{{\al \pi K \rho_{\mathrm{tr}}}}\!\left(\!\! (K\!-\!1)\left( \al\!-\!2  \right)\!+\! \frac{\left( \al\!-1\!  \right) \!}{ N p_{\mathrm{d}} } \right)\!+\!\lambda_{\mathrm{AP}}\!\left( K\!-\!1 \right)\!.\label{checkGamma} 
\end{align}
\end{theorem}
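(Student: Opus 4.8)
The plan is to average the per-realization rate of Lemma~\ref{lemma1} over the spatial process $\Phi_{\mathrm{AP}}$ and to push the expectation inside the logarithm by convexity. Since the pilot-overhead prefactor $\left(1-\frac{K}{\zeta\tau_{c}}\right)$ is deterministic, it remains to lower bound $\E\left[\log_2\left(1+\bar{\gamma}_{k}\right)\right]$. The map $x\mapsto\log_2\left(1+1/x\right)$ is convex on $(0,\infty)$ (its second derivative is strictly positive), so Jensen's inequality applied to $x=1/\bar{\gamma}_{k}$ yields $\E\left[\log_2\left(1+\bar{\gamma}_{k}\right)\right]\ge\log_2\left(1+1\big/\E\left[1/\bar{\gamma}_{k}\right]\right)$. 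Hence it suffices to set $\check{\gamma}_{k}=\E\left[1/\bar{\gamma}_{k}\right]$ and evaluate this single expectation in closed form, which reproduces~\eqref{Ratebar1} with $\gamma_{k}=1/\check{\gamma}_{k}$.

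Next I would invert the SINR of Proposition~\ref{PropDetSINR}: from~\eqref{SINRFinite}, $1/\bar{\gamma}_{k}$ equals $(M^{2}N)^{-1}$ times the sum of the interference and noise traces. The matrices $\bL_{k},\bPhi_{k},\bD,\bC_{k}$ are block-diagonal with scalar blocks, so each trace collapses to a scalar sum over the $M$ APs, e.g.\ $\tr(\bC_{i}\bL_{k})=N\sum_{m}c_{mi}l_{mk}$, $\tr(\bD\bL_{k}^{-1})=N\sum_{m}d_{m}/l_{mk}$, and $\tr^{2}(\bL_{k}\bL_{i}^{-1})=N^{2}\big(\sum_{m}l_{mk}/l_{mi}\big)^{2}$, with $l_{mk}=\min(1,r_{mk}^{-\al})$ and the pilot structure $|\bpsi_{i}^{\H}\bpsi_{k}|^{2}\in\{0,1\}$ entering through $d_{m}$. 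I would then take the expectation term by term over $\Phi_{\mathrm{AP}}$, conditioning first on $M$ (so the AP positions are i.i.d.\ uniform on $\mathcal{A}$) and afterwards averaging over the Poisson law of $M$. Campbell's theorem turns each single sum into $\lambda_{\mathrm{AP}}\int_{\mathcal{A}}(\cdot)\,\mathrm{d}x$, while the squared (coherent pilot-contamination) sum needs the second factorial moment measure of the PPP, splitting into a diagonal part and an off-diagonal product of two independent spatial integrals.

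The distance averages reduce to radial integrals of the bounded path-loss, e.g.\ $\int_{0}^{\infty}l(r)\,2\pi r\,\mathrm{d}r=\pi\al/(\al-2)$ and $\int_{0}^{\infty}l(r)^{2}\,2\pi r\,\mathrm{d}r=\pi\al/(\al-1)$ for $\al>2$; their reciprocals produce exactly the factors $(\al-2)/(\al\pi)$ and $(\al-1)/(\al\pi)$ appearing in~\eqref{checkGamma}. Collecting the co-pilot users through $\sum_{j}|\bpsi_{j}\bpsi_{k}^{\H}|^{2}=\zeta$ and the remaining $K-1$ interferers, and reinserting $p_{\mathrm{d}},\rho_{\mathrm{tr}},\lambda_{\mathrm{AP}}$, should assemble the three groups of terms of $\check{\gamma}_{k}$: the coherent interference plus estimation-noise term, the training-noise term scaled by $\zeta$, and the non-coherent interference scaled by $\lambda_{\mathrm{AP}}$.

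The hard part will be the joint averaging over the random AP count and the AP positions in the presence of the $(M^{2}N)^{-1}$ normalization. After dividing a single-sum trace by $M^{2}$ one is left with a factor $1/M$, so I must control Poisson moments such as $\E\left[1/M\right]$ (behaving like $1/\E\left[M\right]=1/(\lambda_{\mathrm{AP}}S)$ for a dense network) and the $M(M-1)/M^{2}$ weight multiplying the off-diagonal part of the squared sum; this is precisely where a large-network law-of-large-numbers argument, $\frac{1}{M}\sum_{m}f(r_{m})\to\frac{1}{S}\int_{\mathcal{A}}f$, is needed to obtain clean closed forms. I would also handle carefully the dependence of $d_{m}$ (hence of $c_{mi}$ and $\bPhi_{k}$) on the pilot-sharing set of user $k$, since it couples the per-AP weights to the user index and dictates how the $\rho_{\mathrm{tr}}$ and $1/(Np_{\mathrm{d}})$ contributions survive the spatial averaging.
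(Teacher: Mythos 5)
Your proposal follows essentially the same route as the paper's proof: the outer Jensen step on the convex map $x\mapsto\log_2(1+1/x)$ with $\check{\gamma}_k=\E[\bar{\gamma}_k^{-1}]$, followed by term-by-term spatial averaging of the inverted SINR of Proposition~\ref{PropDetSINR}, with the radial path-loss integrals $\frac{q\al\pi}{q\al-2}$ producing the $(\al-2)/(\al\pi)$ and $(\al-1)/(\al\pi)$ factors and a law-of-large-numbers argument absorbing the $1/M$ normalizations. The paper phrases the spatial averaging as conditioning on a ball of radius $R$, applying the LLN, and letting $R\to\infty$ rather than invoking Campbell's theorem and factorial moment measures, but this is the same computation in different clothing.
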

\begin{proof}
See Appendix~\ref{SINRproofDistances2}.
\end{proof}

Notably, if we shed further light into \eqref{Ratebar1}, we observe that the sum  $ \mathrm{SE} $ is   a strictly quasi-concave function of
the number of users $ K $ while the optimal number of antennas per AP depends on the AP density and the quality of CSI in terms of $ N $ and $ \zeta $, respectively. 
These observations are in line with \cite{Papazafeiropoulos2020}, accounting also for the spatial AP randomness.

Although we have applied the law of large numbers regarding the number of APs during the derivation of this proof, it is known that this law is applicable and valid in the case of a finite number of APs obeying to $ M>8 $  \cite{Couillet2011}. Obviously, this range is of practical interest in CF mMIMO systems. The agreement of the analytical results with MC simulations in Section \ref{Numerical} for finite system dimensions confirms this assertion. Thus,  Theorem \ref{PropDetSINRDistances2} and the following results describe realistic systems of finite dimensions.

\subsection{Area Power Consumption}\label{AreaPowerConsumption}
The sources of the area power consumption of a CF mMIMO system  are the power usage during the  transmission $P_{\mathrm{TX}}$ and the circuitry of the system $P_{\mathrm{CPC}}$. Following a similar approach to~\cite{Bjoernson2015a,Pizzo2018} but specialized to CF massive systems, we  have
\begin{align}
\mathrm{APC}=\lambda_{\mathrm{AP}} \left(\frac{1}{\al_{\mathrm{eff}}}P_{\mathrm{TX}}+P_{\mathrm{CPC}}  \right)\!\!,\label{APC1} 
\end{align}
where $\al_{\mathrm{eff}}\!\in\! \left(\!\!\right.0,1\!\!\left.\right]$ is the  power amplifier efficiency. Note that $P_{\mathrm{TX}}$ concerns  both the average powers for the UL pilot and DL payload transmissions. Regarding $P_{\mathrm{CPC}}$, it describes the circuitry dissipation in terms of cooling, power supply, backhaul signaling, digital signal processing, etc. Although the majority of works assume that  $P_{\mathrm{CPC}}$ is a  fixed constant, this is not a realistic assumption, and obviously, not a good design methodology. In practice, each antenna is accompanied by dedicated circuits that contribute to the system power consumption. Above this, if $\mathrm{APC}$ was independent of $N$, the $\mathrm{ASE}$, increasing with $N$, would result in an unbounded EE as $N$ increases, which is irrational~\cite{Bjoernson2015a}. Hence, it is of dire necessity to incorporate in our EE analysis an accurate model for the power consumption. 

\begin{proposition}\label{PropAPC} 
 A  realistic model for the DL APC of CF mMIMO systems is given by
 \begin{align}
\mathrm{APC}\!\left( \btheta \right)&= \lambda_{\mathrm{AP}}( C_{0}+C_{1}K +C_{2}K^{2}+D_{0}N+D_{1}NK\nn\\
&-D_{2}NK^{2})+\mathcal{B} B_{\mathrm{w}} \mathrm{ASE}\!,\label{PropAPC12} 
   \end{align}
   where $C_{0}=P_{\mathrm{FP}}+P_{\mathrm{LO}}$,  $C_{1}=\frac{ B_{\mathrm{w} }}{7 L_{\mathrm{AP}}\tau_{\mathrm{c}}}-\frac{\xi \rho_{\mathrm{d}}}{\al_{\mathrm{eff}}\zeta \tau_{\mathrm{c}}}+P_{\mathrm{UE}}$, $C_{2}=\frac{ 1}{\al_{\mathrm{eff}}\zeta \rho_{\mathrm{tr}} \tau_{\mathrm{c}}}$, $D_{0}=P_{\mathrm{AP}}$, $D_{1}=\frac{3 B_{\mathrm{w}}}{L_{\mathrm{AP}}}+ \frac{3 B_{\mathrm{w}}}{L_{\mathrm{AP}}\tau_{\mathrm{c}}}$,   $D_{2}=\frac{3 B_{\mathrm{w}}\left( \xi-1 \right)} {L_{\mathrm{AP}}\zeta \tau_{\mathrm{c}}}$, and $\mathcal{B}= \left( P_{\mathrm{COD}} + P_{\mathrm{DEC}} +P_{\mathrm{BT}} \right)$.
\end{proposition}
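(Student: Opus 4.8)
The plan is to treat the statement as a careful accounting exercise that refines the generic two-term model in \eqref{APC1}, namely $\mathrm{APC} = \lambda_{\mathrm{AP}}\!\left(\frac{1}{\al_{\mathrm{eff}}} P_{\mathrm{TX}} + P_{\mathrm{CPC}}\right)$, into an explicit polynomial in the design variables $N$ and $K$. First I would make the transmit power $P_{\mathrm{TX}}$ explicit by splitting it into its two time-averaged contributions over a coherence block of $\tau_{\mathrm{c}}$ samples: the UL pilot phase, occupying $\tau_{\mathrm{tr}}$ samples at power $\rho_{\mathrm{tr}}$ per user, and the DL payload phase, occupying $\tau_{\mathrm{d}} = \xi(\tau_{\mathrm{c}} - \tau_{\mathrm{tr}})$ samples at power $\rho_{\mathrm{d}}$. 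Substituting the training length $\tau_{\mathrm{tr}} = K/\zeta$ implied by the reuse factor $\zeta = K/\tau_{\mathrm{tr}}$, the payload fraction $\tau_{\mathrm{d}}/\tau_{\mathrm{c}} = \xi\bigl(1 - K/(\zeta\tau_{\mathrm{c}})\bigr)$ becomes affine in $K$; this is exactly what produces the negative $-\xi\rho_{\mathrm{d}}/(\al_{\mathrm{eff}}\zeta\tau_{\mathrm{c}})$ contribution sitting inside $C_1$. The UL pilot transmit power summed over the $K$ users, each active for a fraction $\tau_{\mathrm{tr}}/\tau_{\mathrm{c}} = K/(\zeta\tau_{\mathrm{c}})$ of the block, is quadratic in $K$ and yields the monomial $C_2 K^2$, its dependence on $\rho_{\mathrm{tr}}$, $\zeta$, $\al_{\mathrm{eff}}$ and $\tau_{\mathrm{c}}$ following from the adopted pilot-power normalization.

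Second, I would expand $P_{\mathrm{CPC}}$ following the component-wise methodology of \cite{Bjoernson2015a,Pizzo2018} but specialized to the CF architecture. I would group the circuit terms as: (i) a load-independent part consisting of the fixed site/control power $P_{\mathrm{FP}}$ and the local-oscillator power $P_{\mathrm{LO}}$, which together form the constant $C_0$; (ii) a per-AP-antenna part (RF chains, data converters, filters) proportional to $N$, giving $D_0 = P_{\mathrm{AP}}$; (iii) a per-user terminal part proportional to $K$, contributing $P_{\mathrm{UE}}$ to $C_1$; and (iv) a load-proportional coding/decoding/backhaul part, whose power scales linearly with the net transported rate and hence with the throughput $B_{\mathrm{w}}\,\mathrm{ASE}$, which I would pull out verbatim as the separate additive term $\mathcal{B} B_{\mathrm{w}}\,\mathrm{ASE}$ with $\mathcal{B} = P_{\mathrm{COD}} + P_{\mathrm{DEC}} + P_{\mathrm{BT}}$.

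The technically delicate step is the digital-signal-processing power, which accounts for the mixed monomials $NK$ and $NK^2$. Here I would count the floating-point operations of conjugate beamforming: forming and applying the precoders $\hat{\bh}_{mk}$ costs on the order of $NK$ operations per channel use, while the MMSE channel estimation of Section~\ref{estimation} costs on the order of $NK$ operations repeated over the $\tau_{\mathrm{tr}} = K/\zeta$ pilot samples, i.e.\ $\propto NK^2/\zeta$. Normalizing these counts by the computational efficiency $L_{\mathrm{AP}}$ and the bandwidth/coherence factors $B_{\mathrm{w}}$ and $1/\tau_{\mathrm{c}}$, and attributing the precoding operations to the payload fraction $\tau_{\mathrm{d}}/\tau_{\mathrm{c}}$, reproduces $D_1 = 3B_{\mathrm{w}}/L_{\mathrm{AP}} + 3B_{\mathrm{w}}/(L_{\mathrm{AP}}\tau_{\mathrm{c}})$ and $D_2 = 3B_{\mathrm{w}}(\xi-1)/(L_{\mathrm{AP}}\zeta\tau_{\mathrm{c}})$, as well as the $B_{\mathrm{w}}/(7 L_{\mathrm{AP}}\tau_{\mathrm{c}})$ piece of $C_1$ coming from the $K$-scaling estimation overhead. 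I expect the main obstacle to be precisely this operation-count bookkeeping: keeping track of which processing happens per pilot sample versus per payload sample, so that the $(\xi-1)$ and $1/\tau_{\mathrm{c}}$ factors and their signs emerge correctly, and verifying that the quadratic-in-$K$ estimation cost lands in $D_2 NK^2$ with the right sign while the linear-in-$K$ residual lands in $C_1$. Finally, I would collect all contributions by the monomials $\{1, K, K^2, N, NK, NK^2\}$, read off $C_0, C_1, C_2, D_0, D_1, D_2$, and add the pulled-out throughput term to arrive at \eqref{PropAPC12}.
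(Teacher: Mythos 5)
Your proposal follows essentially the same route as the paper: it decomposes $\mathrm{APC}$ into the time-averaged transmit power over a coherence block (with $\tau_{\mathrm{tr}}=K/\zeta$, yielding the $K$- and $K^2$-monomials of $C_1$ and $C_2$) plus the circuit components $P_{\mathrm{FP}}, P_{\mathrm{TC}}, P_{\mathrm{C\text{-}BC}}, P_{\mathrm{CE}}, P_{\mathrm{LP}}$, with the flop counts for estimation and precoding producing the $NK$ and $NK^2$ terms, exactly as in Lemma~\ref{AreaPowerConsumptionLemma} and Appendix~\ref{AreaPowerConsumptionLemmaproof}. The only minor discrepancy is that the $\frac{B_{\mathrm{w}}}{7L_{\mathrm{AP}}\tau_{\mathrm{c}}}$ piece of $C_1$ comes from the precoder-computation power $P_{\mathrm{LP_{p}}}$ rather than from the channel-estimation overhead, but this does not affect the structure or outcome of the argument.
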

\begin{proof}
 See Appendix~\ref{AreaPowerConsumptionLemmaproof}.
\end{proof}

It is worthwhile to mention that~\eqref{PropAPC12} is written in a polynomial structure that will facilitate the  optimization taking place in the following section.


\section{EE Maximization}\label{EnergyEfficiencyMaximization} 
This section elaborates on the main objective of this work, which is the maximization of the constrained EE with respect to the parameters defining the size of the network (e.g., the AP density and the number of users)
\footnote{The study of the EE optimization with regard to the transmit power in CF mMIMO systems with PPP distributed APs is the topic of our ongoing research.}. In other words, we scrutinize the tuple of system parameters $\btheta=\left(\zeta, \lambda_{\mathrm{AP}},K,N \right)$  obeying to the problem
\begin{align}\begin{split}
             \btheta^{\star}=\arg\max_{\btheta \in \Theta} \mathrm{EE}\!\left( \btheta \right)&=\frac{B_{\mathrm{w}}{\mathrm{ASE}}\!\left( \btheta \right)}{\mathrm{APC}\!\left( \btheta \right)}\\
 \mathrm{subject}~\mathrm{to}~\bar{\gamma}_{k}\!\left( \btheta \right)&=\gamma_{0},
              \end{split}\label{Maximization} 
\end{align}where ${\mathrm{ASE}}\!\left( \btheta \right)= K^{'} {R}$ with ${R}={R}_{k}$, where  $ {R}_{k} $ is given by Theorem~\ref{PropDetSINRDistances2}, $\mathrm{APC}\!\left( \btheta \right)$  is provided by Proposition~\ref{PropAPC}, $\bar{\gamma}_{k}$  is obtained by Theorem~\ref{PropDetSINRDistances2} while $\gamma_{0}>0$ is a design parameter. The set $\Theta$, including the feasible parameters values, is defined as $\Theta=\{\btheta : \lambda_{\mathrm{AP}}\ge 0,\zeta \ge 1, K/\zeta\le \tau_{\mathrm{c}}, \left( K,N \right)\in \mathbb{Z}_{+}\}$.  The constraint in~\eqref{Maximization} prevents from an optimal tuple of parameters with a low unacceptable achievable rate while it demands a specific QoS \cite{Bjoernson2016,Pizzo2018}. 

We aim at solving \eqref{Maximization} for either $ \lambda_{\mathrm{AP}}, N,K $ when the remaining parameters
are fixed. The advantage of this approach is to obtain closed-form expressions for 
the optimal EE and to shed light into 	the interplay among these parameters.
\subsection{Feasibility}
The optimization problem in~\eqref{Maximization} is feasible for a certain range of values of $\gamma_{0}$ because of the multiuser interference.  
\begin{lemma}
 The feasibility range of values of $\gamma_{0}$, used in the maximization problem for CF mMIMO systems~\eqref{Maximization},  can be described by
 \begin{align}
  \gamma_{0}<\frac{1}{\lambda_{\mathrm{AP}} }.
 \end{align}
\end{lemma}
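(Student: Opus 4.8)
The plan is to characterize the set of target SINR values $\gamma_0$ that the equality constraint $\bar{\gamma}_k(\btheta)=\gamma_0$ in~\eqref{Maximization} can actually meet, and to read off the feasibility window from the range of the achievable SINR. By Theorem~\ref{PropDetSINRDistances2} the relevant quantity is $\gamma_k=1/\check{\gamma}_k$ with $\check{\gamma}_k$ given in~\eqref{checkGamma}, so the constraint is solvable for some $\btheta\in\Theta$ if and only if $\gamma_0$ lies in the image of $\gamma_k$ over the feasible set. Since $\gamma_k$ is a decreasing function of $\check{\gamma}_k$, the largest attainable SINR corresponds to the smallest attainable value of $\check{\gamma}_k$, so the whole argument reduces to lower-bounding $\check{\gamma}_k$ uniformly over $\Theta$.

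First I would note that, under the bounded path-loss model with $\al>2$, every coefficient appearing in~\eqref{checkGamma} is nonnegative: the pilot-contamination group $\sum_{j}|\psi_j\psi_k^{\H}|^2(\cdots)$, the training/noise group weighted by $\zeta/(\al\pi K\rho_{\mathrm{tr}})$, and the spatial-randomness term $\lambda_{\mathrm{AP}}(K-1)$ are each positive. Discarding all but the last term gives the clean bound $\check{\gamma}_k>\lambda_{\mathrm{AP}}(K-1)$. In the genuinely multiuser regime of interest, i.e.\ $K\ge 2$, this yields $\check{\gamma}_k>\lambda_{\mathrm{AP}}$, and inverting produces $\gamma_k=1/\check{\gamma}_k<1/\lambda_{\mathrm{AP}}$ for every feasible $\btheta$. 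Because the constraint forces $\gamma_0$ to equal one such $\gamma_k$, we obtain $\gamma_0<1/\lambda_{\mathrm{AP}}$, which is exactly the claimed feasibility window.

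The step I expect to be the main obstacle is making the ``achievable range'' statement airtight, namely arguing that the supremum $\sup_{\btheta}\gamma_k=1/\inf_{\btheta}\check{\gamma}_k$ is not actually attained, so that the inequality is strict rather than a weak $\le$. I would handle this by observing that the discarded pilot-contamination and training terms are strictly positive for any admissible $(\zeta,K,N,\rho_{\mathrm{tr}})$, so $\check{\gamma}_k$ is bounded strictly away from $\lambda_{\mathrm{AP}}(K-1)$, which keeps the final inequality strict. A secondary point to dispatch is the degenerate single-user case $K=1$: there the interference terms vanish and $\check{\gamma}_k$ can be driven toward zero (e.g.\ by $N\to\infty$), so the bound is vacuous; this is consistent with the lemma, since feasibility ``because of the multiuser interference'' is only meaningful once $K\ge 2$. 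I would state the $K\ge 2$ convention explicitly and emphasize that the spatial-randomness contribution $\lambda_{\mathrm{AP}}(K-1)$ is the irreducible interference floor that ultimately caps the achievable SINR, which also matches the paper's physical narrative that higher AP density raises interference and lowers the attainable SINR.
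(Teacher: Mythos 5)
Your argument is correct for $K\ge 2$, but it takes a genuinely different route from the paper's. You lower-bound $\check{\gamma}_k$ in \eqref{checkGamma} by discarding every nonnegative term except the spatial-interference floor $\lambda_{\mathrm{AP}}(K-1)$, which immediately gives $\bar{\gamma}_k=1/\check{\gamma}_k<1/\left(\lambda_{\mathrm{AP}}(K-1)\right)\le 1/\lambda_{\mathrm{AP}}$ whenever $K\ge2$, with strictness supplied by the strictly positive discarded terms. The paper instead computes the supremum of the achievable SINR directly: it observes that $\bar{\gamma}_k$ is increasing in $N$, sends $N\to\infty$ to obtain the limit \eqref{proof1}, notes that this limit is decreasing in $\zeta$, and then evaluates at the minimal $\zeta=K/\tau_{\mathrm{tr}}$, i.e.\ at $K=1$, which yields exactly $1/\lambda_{\mathrm{AP}}$. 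Your version is more elementary (no limit in $N$ is needed) and makes the physical source of the cap --- the irreducible term $\lambda_{\mathrm{AP}}(K-1)$ --- explicit, whereas the paper's version identifies which corner of the parameter set actually approaches the cap.

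The one place the two arguments genuinely part ways is $K=1$. You correctly observe that, reading \eqref{checkGamma} literally, all interference terms vanish at $K=1$ and $\check{\gamma}_k\to0$ as $N\to\infty$, so the SINR is unbounded and the stated feasibility window cannot hold without a $K\ge2$ convention. The paper's proof nevertheless obtains $1/\lambda_{\mathrm{AP}}$ at $K=1$, but only because the denominator of \eqref{proof1} contains $K\lambda_{\mathrm{AP}}$ where the $N\to\infty$ limit of \eqref{checkGamma} gives $(K-1)\lambda_{\mathrm{AP}}$; the two displays are mutually inconsistent, and your observation is precisely what exposes this. So either the lemma must be read with $K\ge2$ (in which case your proof is complete as written) or the spatial term should be $K\lambda_{\mathrm{AP}}$ rather than $(K-1)\lambda_{\mathrm{AP}}$ (in which case your term-dropping bound goes through for all $K\ge1$ with no case split). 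You should state explicitly which convention you adopt; as it stands, your proof establishes the lemma only for $K\ge2$, which is a restriction the lemma statement itself does not make.
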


\begin{proof}
 In order to obtain the range of values of $ \gamma_{0}$, we simplify the expression of the SINR, being the inverse of~\eqref{checkGamma} by noticing that it is a monotonically increasing function of $N$. Hence, deriving its upper limit as $ N\to \infty$\footnote{Although the assumption of an infinite number of antennas per AP, i.e., $ N \to \infty $ is impractical, it is used here just for showing the feasibility range  of the target SINR and has no impact on the main results since the following analysis is for finite $ N $.}, we obtain \eqref{proof1}.
 	\begin{figure*}
\begin{align}
  \lim_{N\to\infty} \bar{\gamma}_{k}\!=\!\frac{ \al \pi \rho_{\mathrm{tr}} K}{ \al \pi K\left(\sum_{j=1}^{K}|\psi_{j}\psi_{k}^{\H}|^{2}  \left( K\!-\!1 \right) \!+\!K \lambda_{\mathrm{AP}}  \right) \rho_{\mathrm{tr}}+\left( \al\!-\!2 \right) \left( K\!-\!1 \right)\zeta\!}.\label{proof1}
\end{align}
	\hrulefill
\end{figure*}
Since the upper limit is a decreasing function of the optimizable variable $\zeta$, we exploit the constraint $\zeta=K/\tau_{\mathrm{tr}}$ by taking its minimal value when $K=1$, and we obtain  the feasible $ \gamma_{0}$.\end{proof}

This lemma reveals that the upper limit of the SINR depends only on  $\lambda_{\mathrm{AP}} $ as $N \to\infty$. In the case of CF mMIMO, the typical value concerning the number of APs is$~100-200$~\cite{Ngo2017} which is equivalent to a density $\lambda_{\mathrm{AP}}\approx 10^{-4}~\mathrm{m}^{-2}$. In such case, e.g., $\lambda_{\mathrm{AP}}=10^{-4}~\mathrm{m}^{-2}$, the average SE per user is $\log_{2}\left( 1+100 \right)\approx 13.29~\mathrm{b/s/Hz}$. This value, showing the feasibility of the optimization problem described by~\eqref{Maximization}, is larger than the SE of currently applied systems~\cite{Holma2011}. Hence, the optimization problem under study is quite meaningful for practical systems. 
\subsection{Optimal Pilot Reuse Factor}
Herein, we derive  the optimal pilot reuse factor $\zeta^{\star}$ while the rest of the parameters are fixed.
\begin{theorem}\label{ReuseFactor} 
Let any set of $\{\lambda_{\mathrm{AP}}, K, N\}$  resulting in the feasibility of the maximization of EE given by~\eqref{Maximization}. The  optimal pilot reuse factor, satisfying the SINR constraint, is obtained by
\begin{align}
  \zeta^{\star}=\frac{\al \pi  K N   \rho_{\mathrm{tr}}  \rho_{\mathrm{d}}-\gamma_{0}Q_{1}}{\gamma_{0}Q_{2}}.\label{zetastar} 
\end{align}
\end{theorem}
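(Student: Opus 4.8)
The plan is to observe that once $\{\lambda_{\mathrm{AP}}, K, N\}$ are held fixed, the reuse factor $\zeta$ is the only remaining free coordinate of $\btheta$, so the equality SINR constraint $\bar{\gamma}_{k}(\btheta)=\gamma_{0}$ in \eqref{Maximization} already pins down $\zeta$ uniquely; there is no genuine one-dimensional search left to perform. Since the feasible set contains exactly one admissible $\zeta$, that value is trivially the maximizer of the EE objective. Concretely, by Theorem~\ref{PropDetSINRDistances2} the constrained SINR equals $\gamma_{k}=1/\check{\gamma}_{k}$, so the constraint is equivalent to $\check{\gamma}_{k}=1/\gamma_{0}$, and the whole task reduces to solving this single scalar equation for $\zeta$.

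First I would record that the expression \eqref{checkGamma} for $\check{\gamma}_{k}$ is \emph{affine} in $\zeta$: only the middle summand carries a factor of $\zeta$, while the first and third summands are $\zeta$-free. Writing $\check{\gamma}_{k}=\frac{B}{\al\pi K\rho_{\mathrm{tr}}}\,\zeta+b$, with coefficient $B=(K-1)(\al-2)+\frac{\al-1}{Np_{\mathrm{d}}}$ and $\zeta$-free part $b=\sum_{j}|\psi_{j}\psi_{k}^{\H}|^{2}\!\left(\frac{\al-2}{\al\pi Np_{\mathrm{d}}}+K-1\right)+\lambda_{\mathrm{AP}}(K-1)$, the constraint $\check{\gamma}_{k}=1/\gamma_{0}$ becomes a linear equation. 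Since $B>0$ in the feasible regime (for $\al>2$, and trivially when $K=1$), it admits the unique solution
\begin{align}
 \zeta=\frac{\al\pi K\rho_{\mathrm{tr}}\bigl(1-\gamma_{0}b\bigr)}{\gamma_{0}B}.\nn
\end{align}

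It then remains to cast this quotient into the advertised form \eqref{zetastar}. I would multiply numerator and denominator by $Np_{\mathrm{d}}$ (identifying the DL transmit power $p_{\mathrm{d}}=\rho_{\mathrm{d}}$), which clears the inner $1/(Np_{\mathrm{d}})$ fractions: the leading term of the numerator becomes $\al\pi KN\rho_{\mathrm{tr}}\rho_{\mathrm{d}}$, the $\gamma_{0}$-weighted remainder collapses into $\gamma_{0}Q_{1}$ with $Q_{1}=\al\pi K\rho_{\mathrm{tr}}Np_{\mathrm{d}}\,b$, and the cleared denominator becomes $\gamma_{0}Q_{2}$ with $Q_{2}=Np_{\mathrm{d}}(K-1)(\al-2)+(\al-1)$. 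This reproduces \eqref{zetastar} exactly.

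There is no deep obstacle here: the computation is a single linear solve, and the only real care lies in the bookkeeping that clears the $Np_{\mathrm{d}}$ denominators and correctly aggregates the constant terms into $Q_{1}$ and $Q_{2}$. The one conceptual point worth verifying — more a consistency check than a difficulty — is admissibility: one should confirm that the resulting $\zeta^{\star}$ lands in $\Theta$, i.e. $\zeta^{\star}\ge 1$, which is precisely what the feasibility window $\gamma_{0}<1/\lambda_{\mathrm{AP}}$ of the preceding lemma guarantees, so the unique constraint-satisfying $\zeta$ is indeed the optimizer.
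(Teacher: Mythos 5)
Your proposal is correct and follows essentially the same route as the paper: both treat the equality constraint $\bar{\gamma}_{k}=\gamma_{0}$ with $\bar{\gamma}_{k}=1/\check{\gamma}_{k}$ as uniquely pinning down $\zeta$, exploit that \eqref{checkGamma} is affine in $\zeta$, and solve the resulting linear equation before regrouping the constants into $Q_{1}$ and $Q_{2}$. The only divergence is bookkeeping: your $Q_{2}=N\rho_{\mathrm{d}}(K-1)(\al-2)+(\al-1)$ omits the $/K$ appearing in the paper's definition of $Q_{2}$, and your version is the one consistent with \eqref{checkGamma} and with the paper's own $Q_{1}$, so the discrepancy points to a normalization typo in the paper rather than a gap in your argument.
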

\begin{proof}
 The reuse factor $\zeta^{\star}$ is obtained by  focusing on the constraint and  collecting the terms including $\zeta$ in the SINR given by~$\bar{\gamma}_{k}=1/\check{\gamma}_{k}$ as
 \begin{align}
 \gamma_{0}= \frac{\al \pi  N   \rho_{\mathrm{tr}}  \rho_{\mathrm{d}}}{  Q_{1}-\zeta Q_{2}},\label{proofpilotbeta} 
 \end{align}
 where  we set 
 \begin{align}
  Q_{1}&= K \rho_{\mathrm{tr}}\bigg( \!\! \left(\al-2  \right) \sum_{j=1}^{K}|\psi_{j}\psi_{k}^{\H}|^{2} \nn\\
  &+\al \pi N  \rho_{\mathrm{d}}\left( K-1 \right)\Big(  \sum_{j=1}^{K}|\psi_{j}\psi_{k}^{\H}|^{2} +\lambda_{\mathrm{AP}}\Big) \!\!\bigg)\!, \nn\\
  Q_{2}&=\left( \al-1 +N \rho_{\mathrm{d}} \left( \al-2 \right)\left( K-1 \right) \right)\!/K,
 \end{align}
 and we solve~\eqref{proofpilotbeta} with respect to $\zeta$.
\end{proof}

Theorem~\ref{ReuseFactor} provides the dependence of $\zeta^{\star}$ on the rest of the system parameters. According to its physical interpretation, a smaller pilot reuse factor, meaning a larger training phase, results in both    more precise channel estimation and less pilot contamination. Intuitively, a better channel estimation increases the SE, or equivalently, a better SINR constraint $ \gamma_{0}$ is allowed, which comes to agreement with~\eqref{zetastar}. It is shown that  $\zeta^{\star}$ is a decreasing function of $Q_{1}$ and $Q_{2}$, which both are increasing functions of $K$. However, a larger $K$ means higher interference, requiring a better channel estimation, i.e., a lower $\zeta$ which admits to the dependence shown by~\eqref{zetastar}.
\subsection{Optimal  APs Density}
After plugging~\eqref{zetastar} into  the optimization problem,~\eqref{Maximization} is written as
\begin{align}\begin{split}
            \mathrm{EE}\!\left(  \zeta^{\star}, K, N \right)&=\frac{B_{\mathrm{w}}{\mathrm{ASE}}\!\left( \zeta^{\star}, K, N\right)}{\mathrm{APC}\!\left(  \zeta^{\star}, K, N \right)}\\
 \mathrm{subject}~\mathrm{to}~1&\le\frac{\al \pi  N   \rho_{\mathrm{tr}}  \rho_{\mathrm{d}}-\gamma_{0}Q_{1}}{\gamma_{0}Q_{2}}\le \frac{K}{\tau_{\mathrm{c}}}.
              \end{split}\label{MaximizationZetastar} 
\end{align}
\begin{theorem}\label{OptimalLambda} 
 Let any set of $\{K,N\}$  keeping the optimization problem \eqref{MaximizationZetastar} feasible. For  fixed  $K$ and $N$, the EE  is maximized by 
 \begin{align}
 \lambda_{\mathrm{AP}}^{\star}=\min \left( \max\left( \lambda_{\mathrm{AP}_{0}},\lambda_{\mathrm{AP}_{1}} \right), \lambda_{\mathrm{AP}_{2}}\right),\label{optimallambda} 
  \end{align}
where 
\begin{align}
 \lambda_{\mathrm{AP}_{0}}=\frac{\left(a_{1}+a_{3}\right)G+\sqrt{a_{2} a_{3}a_{4}\left(a_{1}+a_{3}\right)G}}{a_{2}a_{4}G}\label{lambda0}
\end{align}
with
\begin{align}
G=a_{2}\left(a_{4}+a_{5}+a_{6}K\log\left(1+\gamma_{0}\right)\right)
\end{align}
while $ \lambda_{\mathrm{AP}_{1}} =\frac{a_{3}-a_{1}}{a_{2}} $,  $ \lambda_{\mathrm{AP}_{2}} =\frac{\tau_{\mathrm{c}}/\left(K a_{3}\right)+a_{1}}{a_{2}} $, and the parameters $\{a_{i}\}$ are provided in Table~\ref{ParameterValues3}.
\end{theorem}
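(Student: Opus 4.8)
The plan is to reduce the objective in \eqref{MaximizationZetastar} to a single-variable ratio in $\lambda_{\mathrm{AP}}$ and then run an ordinary first-order analysis, while respecting the feasibility interval. First I would observe that, with $K$ and $N$ fixed and the SINR constraint active, the post-substitution objective depends on $\lambda_{\mathrm{AP}}$ through only two channels: the reuse factor $\zeta^{\star}$ of \eqref{zetastar}, which enters $R_{k}$ solely via the prefactor $1-K/(\zeta^{\star}\tau_{\mathrm{c}})$ in \eqref{Ratebar1}, and the explicit affine circuit-power term $\lambda_{\mathrm{AP}}(C_{0}+\cdots-D_{2}NK^{2})$ of the APC model in \eqref{PropAPC12}. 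From \eqref{zetastar} and the definition of $Q_{1}$ in Theorem~\ref{ReuseFactor} (which is affine in $\lambda_{\mathrm{AP}}$), the reuse factor is an affine, strictly decreasing function $\zeta^{\star}=p-q\lambda_{\mathrm{AP}}$ with $p,q>0$ determined by the fixed parameters; moreover, since the constraint pins $\gamma_{k}=\gamma_{0}$, the factor $\log_{2}(1+\gamma_{0})$ is constant.

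Next I would clear the $1/\zeta^{\star}$ dependence by multiplying the numerator and denominator of $\mathrm{EE}(\lambda_{\mathrm{AP}})=B_{\mathrm{w}}\mathrm{ASE}/\mathrm{APC}$ by $\zeta^{\star}=p-q\lambda_{\mathrm{AP}}$, which stays positive on the feasible set. This turns the numerator $B_{\mathrm{w}}\mathrm{ASE}$ into an affine function of $\lambda_{\mathrm{AP}}$ and the denominator $\mathrm{APC}$ into a quadratic (the circuit term multiplied by $p-q\lambda_{\mathrm{AP}}$ plus the affine backhaul term $\mathcal{B}B_{\mathrm{w}}\mathrm{ASE}$). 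Writing $\mathrm{EE}=N/D$ with $N$ affine and $D$ quadratic, its stationary points solve $N'D-ND'=0$. The key fact I expect is heavy cancellation: because $N$ is affine and $D$ is quadratic, $N'D-ND'$ is again quadratic, and its leading and cross terms collapse to the clean condition $\tau_{\mathrm{c}}(\zeta^{\star})^{2}-2K\zeta^{\star}+pK=0$ in the variable $\zeta^{\star}$. Solving this quadratic and mapping back through $\lambda_{\mathrm{AP}}=(p-\zeta^{\star})/q$ yields the interior candidate $\lambda_{\mathrm{AP}_{0}}$ of \eqref{lambda0}, the discriminant producing the square root; the abbreviations $\{a_{i}\}$ in Table~\ref{ParameterValues3} are exactly the groupings of $p$, $q$, the circuit coefficient, $\mathcal{B}B_{\mathrm{w}}$ and $\log_{2}(1+\gamma_{0})$ that render this root compact.

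To justify projecting onto the feasible set, I would show that $\mathrm{EE}$ is strictly quasi-concave in $\lambda_{\mathrm{AP}}$ on the feasible interval, so that the first-order condition has a single relevant root, namely the unconstrained maximizer, and the constrained optimum is its projection. The feasibility region follows by translating $1\le\zeta^{\star}\le K/\tau_{\mathrm{c}}$ from \eqref{MaximizationZetastar} into bounds on $\lambda_{\mathrm{AP}}$: since $\zeta^{\star}=p-q\lambda_{\mathrm{AP}}$ is strictly decreasing, the two endpoints invert to a lower bound $\lambda_{\mathrm{AP}_{1}}$ and an upper bound $\lambda_{\mathrm{AP}_{2}}$, and clamping $\lambda_{\mathrm{AP}_{0}}$ to $[\lambda_{\mathrm{AP}_{1}},\lambda_{\mathrm{AP}_{2}}]$ is precisely the nested $\min(\max(\cdot,\cdot),\cdot)$ of \eqref{optimallambda}.

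The main obstacle is twofold. First is the bookkeeping in the first-order condition: one must verify that the cubic-looking expansion of $N'D-ND'$ genuinely collapses to a quadratic (so a closed form exists at all), match the surviving root to the specific $\{a_{i}\}$ parametrization of Table~\ref{ParameterValues3}, and fix the correct sign of the square-root branch. Second is establishing quasi-concavity rigorously rather than merely locating a stationary point, so that the $\min/\max$ projection is provably the global constrained maximizer; this amounts to checking that the sign of $N'D-ND'$ changes exactly once across $\lambda_{\mathrm{AP}_{0}}$ while the denominator $D$ keeps a fixed sign over the feasible interval.
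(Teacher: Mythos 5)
Your route is the paper's own: substitute the affine map $\lambda_{\mathrm{AP}}\mapsto\zeta^{\star}$ from \eqref{zetastar} into the objective, reduce the EE to a single-variable quasi-concave ratio (the paper's \eqref{proofN1}, parametrized through $\bar a=a_{3}/(a_{2}\lambda_{\mathrm{AP}}-a_{1})$, which is exactly your $K/(\zeta^{\star}\tau_{\mathrm{c}})$), obtain the interior candidate $\lambda_{\mathrm{AP}_{0}}$ from the first-order condition, and obtain $\lambda_{\mathrm{AP}_{1}},\lambda_{\mathrm{AP}_{2}}$ by inverting the $\zeta$-constraints of \eqref{MaximizationZetastar}. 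Clearing the denominator by $\zeta^{\star}$ so the objective is affine-over-quadratic is only a cosmetic variant of this, and your worry that $N'D-ND'$ ``looks cubic'' is moot: with $N$ affine and $D$ quadratic it is automatically of degree at most two, so a closed-form root is guaranteed.

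The one genuine flaw is the specific stationarity condition you assert, $\tau_{\mathrm{c}}(\zeta^{\star})^{2}-2K\zeta^{\star}+pK=0$. It involves only $\tau_{\mathrm{c}}$, $K$ and the intercept $p$ of $\zeta^{\star}(\lambda_{\mathrm{AP}})$, hence is independent of the power-consumption model; that cannot be correct, since the maximizer \eqref{lambda0} explicitly contains $a_{4}$ and $G=a_{2}(a_{4}+a_{5}+a_{6}K\log(1+\gamma_{0}))$, i.e., the circuit and backhaul costs. Writing $t=\zeta^{\star}\tau_{\mathrm{c}}/K$ and normalizing the affine-over-quadratic form, the first-order condition is of the type $t^{2}-2t-\bigl((1+u)(1+v)-1\bigr)=0$, where $u=a_{1}/a_{3}$ comes from the affine map and $v$ is a ratio built from $a_{4},a_{5},a_{6}$ and $\log_{2}(1+\gamma_{0})$; only the $t^{2}-2t$ part of your quadratic survives, and the constant term must carry the APC coefficients (this is also why, contrary to your opening inventory, the $\zeta$-dependence hidden inside $C_{1},C_{2},D_{2}$ of \eqref{PropAPC12} matters). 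With that constant term corrected and the bookkeeping matched to Table~\ref{ParameterValues3}, the remainder of your argument---quasi-concavity, a single sign change of the derivative, and clamping to $[\lambda_{\mathrm{AP}_{1}},\lambda_{\mathrm{AP}_{2}}]$---is sound and coincides with what the paper does (the paper simply cites a known quasi-concavity lemma rather than proving it).
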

\begin{proof}
 Both  $ {\mathrm{ASE}} $ and APC include the term $\zeta^{\star}\tau_{\mathrm{c}}/K$. Hence, we proceed with its computation which gives $\zeta^{\star}\tau_{\mathrm{c}}/K=\frac{a_{2}\lambda_{\mathrm{AP}}-a_{1}}{a_{3}}$. Then, after subistituting this term into the objective funtion of~\eqref{MaximizationZetastar}, the EE becomes
 \begin{align}
 \mathrm{EE}\!\left(\zeta^{\star} \right)=\frac{\frac{K \xi }{S\lambda_{\mathrm{AP}}}\left( 1- \bar{a}\right)\log_{2}\left( 1+\gamma\right)}{a_{4}+a_{5}\bar{a}+a_{6} K \xi \left( 1- \bar{a}\right)\log_{2}\left( 1+\gamma\right )},\label{proofN1} 
 \end{align}
 where $ \bar{a} =\frac{a_{3}}{a_{2}\lambda_{\mathrm{AP}}-a_{1}}$. Following the approach in \cite[Lem. 3]{Bjoernson2015a}, it can be shown that \eqref{proofN1} is a quasi-concave function  of $ \lambda_{\mathrm{AP}} $. Thus, \eqref{lambda0} is obtained by taking the first derivative of \eqref{proofN1} and equating to zero. Given that the constraint in \eqref{MaximizationZetastar} depends on $ \lambda_{\mathrm{AP}} $, we obtain  $ \lambda_{\mathrm{AP}_{1}} $ and $ \lambda_{\mathrm{AP}_{2}}$. 
\end{proof}

\begin{table*}
\caption{Optimization Parameters for Optimal  APs Density $\lambda_{\mathrm{AP}}$ and Number of Antennas $ N $}
\begin{center}
                                             \begin{tabulary}{\columnwidth}{ | c | c | |c|c|}\hline
\!\!\!{\bf Parameter} \!\!\!\!\!&{\bf Value}&\!\!\!{\bf Parameter}\!\!\! &{\bf Value}\\ \hline\hline
 $a_{1}$& $\!\!\!\rho_{\mathrm{tr}}\tau_{\mathrm{c}}K\!\left(\! \al \pi  N \!\rho_{\mathrm{d}}\!-\!\left(\! \left( \al\!-\!2 \right)\!+\!\al \pi N\! \rho_{\mathrm{d}} \right)\!\gamma\!  \sum_{j=1}^{K}\!\!|\psi_{j}\psi_{k}^{\H}|^{2} \! \right)$ \!\!\!\!\!\!&$b_{1}$&  $K \left( \al-1 \right)$\\ \hline
$a_{2}$ &  $\al \pi \gamma  \rho_{\mathrm{d}}\rho_{\mathrm{tr}}\tau_{\mathrm{c}}K N \left( K-1 \right)$&$b_{2}$ & $\rho_{\mathrm{d}} K \left( K-1 \right)\left( \al-2 \right) $\\ \hline
$a_{3}$&  $\gamma K N \rho_{\mathrm{d}}\left(\al-1+ \left( K-1 \right)\left( \al-2 \right) \right)$&$b_{3}$&\!\!  \!\!\!\!$\al \pi \rho_{\mathrm{d}}\rho_{\mathrm{tr}}\tau_{\mathrm{c}}\!\left(\! 1\!-\!\gamma \!\left(  \sum_{j=1}^{K}\!\!|\psi_{j}\psi_{k}^{\H}|^{2}\!+\!\lambda_{\mathrm{AP}}\! \!\right)\!\!\left( K\!-\!1\! \right)\!\! \right)$\!\!\!\!\!\\ \hline
$a_{4}$ &  ${C}_{0}+C_{11}K +N\left( D_{0}+D_{1}K \right)$&$b_{4}$ &  $ \rho_{\mathrm{tr}}\tau_{\mathrm{c}}\gamma \left( \al-2 \right)\sum_{j=1}^{K}|\psi_{j}\psi_{k}^{\H}|^{2}$ \\ \hline
 $a_{5}$ &  $ \frac{1}{\al_{\mathrm{eff}}}\left( \left( \xi-1 \right)\rho_{\mathrm{d}} +K \rho_{\mathrm{tr}} \right)-\frac{3BKN \xi}{L_{\mathrm{AP}}}$&$b_{5}$ &  $\left({C}_{0}+C_{11}K\right)/\lambda_{\mathrm{AP}}$ \\ \hline
 $a_{6}$ &  $P_{\mathrm{COD}} + P_{\mathrm{DEC}} +P_{\mathrm{BT}}$&$b_{6}$ &  $\left(D_{0}+D_{1}K -D_{2}K^{2}\right)/\lambda_{\mathrm{AP}}$ \\ \hline
 $a_{7}$&---&$b_{7}$ &  $\left( c_{22}-d_{22} -c_{12}\right)K^{2}/\lambda_{\mathrm{AP}}$ \\ \hline
 $a_{8}$&---&$b_{8}$ &  $\left(P_{\mathrm{COD}} + P_{\mathrm{DEC}} +P_{\mathrm{BT}}\right)/\lambda_{\mathrm{AP}}$ \\ \hline
  \end{tabulary}\label{ParameterValues3}
                                         \end{center}
                                                                          \vskip -8mm
                                        \end{table*}


\subsection{Optimal Number of AP Antennas and Users}\label{optimalNK} 
The optimal values of $N$ and $K$ are found by means of the maximization problem~\eqref{MaximizationZetastar}  in the case of optimal $\zeta^{\star}$. Initially, we consider the integer-relaxed problem where $K$ and $N$ can be any positive scalars, but then, we select the corresponding integer values. 

\begin{theorem}\label{optimalNN} 
 Let the maximization problem~\eqref{Maximization} with  $\lambda_{\mathrm{AP}}$, $K$, and $N$ real variables. For any fixed $\lambda_{\mathrm{AP}},K>0$, the optimal number of AP antennas $N^{\star}$ is given by
 \begin{align}
  N^{\star}=\min \left( \max\left( N_{0},N_{1} \right), N_{2
  } \right)\label{optimalN} 
 \end{align}
with $ N_{0}=\frac{q_{1}-\sqrt{q_{3}}}{q_{2}}$ while $N_{1}=\frac{b_{1}+b_{3}}{b_{4}-b_{2}}$ and $N_{2}=\frac{\frac{K}{\tau_{\mathrm{c}}}b_{1}+b_{3}}{b_{4}-\frac{K}{\tau_{\mathrm{c}}}b_{2}}$,
 where
  $q_{1}=b_{3}^{2}\bar{d}_{11}+b_{1}K \left( b_{1}d_{22}K^{2}+b_{4}\bar{c}_{11}\right)+b_{3} K b_{2}\bar{c}_{11}+b_{1}\left( \bar{d}_{11}+d_{22}K \right) $, $  q_{2}\!=\!2\!\left( b_{3}\!+\!b_{1}K \right)\!\left( b_{4}\!-\!b_{2}K \right)\!\left( b_{2}d_{22}K^{2}\!-\!b_{4}\bar{d}_{11} \right) $, $ q_{3}=q_{1}^{2}+4 q_{2}( b_{2}d_{22}K^{2}-b_{4}\bar{d}_{11} )$ with $\bar{c}_{11}=c_{12}-\left( c_{11}+c_{22} \right)K$,  $c_{11}= \left(\frac{ B_{\mathrm{w} }}{7 L_{\mathrm{AP}}\tau_{\mathrm{c}}}+P_{\mathrm{UE}}\right)/\lambda_{\mathrm{AP}}$, $c_{12}= \left( \rho_{\mathrm{d}}\frac{1-\xi}{\al_{\mathrm{eff}}}-C_{0}\right)/\lambda_{\mathrm{AP}}$, $c_{22}=  \frac{\rho_{\mathrm{tr}}}{\al_{\mathrm{eff}}\lambda_{\mathrm{AP}}}$ ,  ${d}_{22}=\frac{3 B_{\mathrm{w} }  \xi}{ L_{\mathrm{AP}}\lambda_{\mathrm{AP}}}$ $\bar{d}_{11}=\left(D_{0}+D_{1}K\right)/\lambda_{\mathrm{AP}}$, and  the parameters $\{b_{i}\}$  are provided in Table~\ref{ParameterValues3}.
\end{theorem}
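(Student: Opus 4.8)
The plan is to mirror the argument used for Theorem~\ref{OptimalLambda}, but now reducing the EE to a single-variable function of $N$ with $\lambda_{\mathrm{AP}}$ and $K$ held fixed. The starting point is that the SINR constraint in~\eqref{MaximizationZetastar} pins the SINR to $\gamma_{0}$, so the factor $\log_{2}(1+\gamma_{0})$ in the per-user rate of Theorem~\ref{PropDetSINRDistances2} is a constant. Consequently, the entire $N$-dependence of the objective enters only through the prelog factor $1-K/(\zeta^{\star}\tau_{\mathrm{c}})$ of the ASE and through the APC of Proposition~\ref{PropAPC}, once $\zeta^{\star}=\zeta^{\star}(N)$ from Theorem~\ref{ReuseFactor} is substituted.

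First I would express the coupling term $\zeta^{\star}\tau_{\mathrm{c}}/K$ explicitly as a function of $N$. Since the quantities $Q_{1}$ and $Q_{2}$ appearing in~\eqref{zetastar} are both affine in $N$, the optimal reuse factor $\zeta^{\star}$ is a ratio of two affine functions of $N$, i.e.\ a M\"obius map. Substituting this into the ASE and into the APC turns each into an explicit rational function of $N$; here care is needed because $\zeta$ enters the APC coefficients $C_{1}$, $C_{2}$, and $D_{2}$ through their denominators, so these contributions also acquire the $N$-dependence of $\zeta^{\star}$. Collecting numerator and denominator and using the parameters $\{b_{i}\}$ and $\{c_{ij},d_{ij}\}$ of Table~\ref{ParameterValues3}, the objective can be cast in the canonical form
\begin{align}
\mathrm{EE}\!\left(N\right)=\frac{z_{0}+z_{1}N}{y_{0}+y_{1}N+y_{2}N^{2}},\nn
\end{align}
that is, an affine numerator over a quadratic denominator.

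Next I would invoke the quasi-concavity criterion of~\cite[Lem.~3]{Bjoernson2015a}: a ratio of this form, with a positive affine numerator and a positive quadratic denominator over the feasible interval, is quasi-concave in $N$. Quasi-concavity guarantees a unique interior stationary point, found by setting the first derivative to zero. The stationarity condition $\frac{d}{dN}\mathrm{EE}(N)=0$ clears to a quadratic equation in $N$, and applying the quadratic formula while selecting the root that yields the maximizer (rather than the spurious root) produces the unconstrained optimum $N_{0}=\left(q_{1}-\sqrt{q_{3}}\right)/q_{2}$, with $q_{1},q_{2},q_{3}$ as stated.

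Finally I would enforce feasibility. The constraint $1\le\zeta^{\star}\le K/\tau_{\mathrm{c}}$ in~\eqref{MaximizationZetastar}, being equivalent to a pair of bounds on $N$ after solving the boundary equations $\zeta^{\star}(N)=1$ and $\zeta^{\star}(N)=K/\tau_{\mathrm{c}}$ for $N$, gives the endpoints $N_{1}=(b_{1}+b_{3})/(b_{4}-b_{2})$ and $N_{2}=(\tfrac{K}{\tau_{\mathrm{c}}}b_{1}+b_{3})/(b_{4}-\tfrac{K}{\tau_{\mathrm{c}}}b_{2})$. Because the objective is quasi-concave, hence unimodal, the constrained maximizer is obtained simply by projecting $N_{0}$ onto $[N_{1},N_{2}]$, which is precisely $N^{\star}=\min(\max(N_{0},N_{1}),N_{2})$. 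I expect the main obstacle to be the bookkeeping that reduces the objective to the canonical affine-over-quadratic form: the $\zeta^{\star}$-dependence sits in the denominators of several APC coefficients, so one must verify that after substitution the numerator remains affine and the denominator remains exactly quadratic in $N$, as this is what both the quasi-concavity lemma and the quadratic stationarity condition rely on.
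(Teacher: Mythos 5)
Your proposal follows essentially the same route as the paper: substitute the Möbius-type expression $\zeta^{\star}\tau_{\mathrm{c}}/K=\frac{b_{1}+b_{2}N}{b_{4}N-b_{3}}$ into the objective, recognize the resulting EE as a quasi-concave function of $N$ (affine-over-quadratic after clearing denominators), set the first derivative to zero to obtain the quadratic whose relevant root is $N_{0}$, and project onto the interval $[N_{1},N_{2}]$ imposed by the reuse-factor constraint. The only differences are presentational (the paper keeps the objective in the $\bar{b}$ form of~\eqref{proofN} rather than expanding to the canonical rational form), so the proposal is correct and matches the paper's argument.
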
\begin{proof}
 Similar to the proof of Theorem~\ref{OptimalLambda}, we have  $\zeta^{\star}\tau_{\mathrm{c}}/K=\frac{b_{1}+b_{2}N}{b_{4}N-b_{3}}$, where the parameters $\{b_{i}\}$ are provided in Table~\ref{ParameterValues3}. Then, after substituting this term into the objective function of~\eqref{MaximizationZetastar}, the EE becomes
 \begin{align}
 \!\!\!\mathrm{EE}\!\left(\zeta^{\star} \right)\!=\!\frac{\frac{K \xi}{S} \left( 1-\bar{b}\right)\log_{2}\left( 1+\gamma\right)}{b_{5}\!+\!b_{6} N\!+\!b_{7}\bar{b}\!+\!b_{8}K \xi \left( 1\!- \!\bar{b}\right)\!\log_{2}\!\left( 1\!+\!\gamma\right )},\label{proofN} 
 \end{align}
 which represents a quasi-concave function of $N$. Note that $\bar{b}= \frac{b_{1}+b_{2}N}{b_{4}N-b_{3}} $. The optimal value of $N$ is obtained by computing its first derivative with respect to $N$ and equating it to zero. The resultant value, satisfying the unconstraint problem, is given by~\eqref{optimalN}.
 Taking into account for the constraint in~\eqref{MaximizationZetastar}, this can be written as $ \frac{\tau_{\mathrm{c}}}{K}\le \frac{b_{1}+b_{2}N}{b_{4}N-b_{3}}\le 1$, which results in $N_{1}\le  N^{\star}\le N_{2}$.\end{proof}


\begin{theorem}\label{optimalKK} 
 Let the maximization problem~\eqref{Maximization} with  $ \lambda_{\mathrm{AP}} $, $K$ and $N$ real variables. For any fixed $ \lambda_{\mathrm{AP}} $, $N>0$, the optimal number of users  $K^{\star}$ is given by
 \begin{align}
  K^{\star}=\max\left(K_{2},\max\left(K_{1,1},\min\left(K_{0},K_{1,2}\right)\right)\right),
 \end{align}
 where $ K_{0} $ is one of the real roots of a quintic equation, i.e.,  a polynomial of degree five given by $ \sum_{i=0}^{5} p_{i}x_{i}=0$ with $ p_{0}=e_{1}-e_{2} $, $ p_{1}=\left(e_{1}-e_{2}\right) +2 e_{3}$, $ p_{2}=3\left(e_{3}-2e_{1}\right) $, $ p_{3}=\left(e_{2}+e_{1}\right)-e_{1}e_{3} $, $ p_{4}=2e_{4}\left(e_{1}-e_{2}\right) $, $ p_{5}=\left(e_{2}-e_{1}\right)-e_{1} $. Also, we have $ K_{2}=\frac{e_{4 \frac{K}{\tau_{\mathrm{c}}}}-e_{2}}{e_{1}-e_{3}\frac{K}{\tau_{\mathrm{c}}}} $ and
 \begin{align}
 K_{1,1}&=\frac{-\left(e_{1}-e_{2}\right)-\sqrt{\left(e_{1}-e_{2}\right)^{2}-4 e_{3}e_{4}}}{2 e_{4}}\\
 K_{1,2}&=\frac{-\left(e_{1}-e_{2}\right)+\sqrt{\left(e_{1}-e_{2}\right)^{2}-4 e_{3}e_{4}}}{2 e_{4}}.
 \end{align}
 \end{theorem}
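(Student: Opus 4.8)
The plan is to mirror the strategy used in the proofs of Theorems~\ref{OptimalLambda} and~\ref{optimalNN}, reducing the bivariate objective to a single-variable optimization over $K$ with $\lambda_{\mathrm{AP}}$ and $N$ held fixed. First I would substitute the optimal reuse factor $\zeta^{\star}$ from Theorem~\ref{ReuseFactor} into the constrained problem~\eqref{MaximizationZetastar}, so that the SINR equals the target and the factor $\log_{2}(1+\gamma)$ is constant. As in the companion proofs, the combination $\zeta^{\star}\tau_{\mathrm{c}}/K$ that appears in both $\mathrm{ASE}$ and $\mathrm{APC}$ would be written as a rational function of $K$ with coefficients $\{e_{i}\}$, in direct analogy with the $\frac{b_{1}+b_{2}N}{b_{4}N-b_{3}}$ expression used for $N$. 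A crucial difference from the $N$-case is that the co-pilot sum $\sum_{j=1}^{K}|\psi_{j}\psi_{k}^{\H}|^{2}$ equals the pilot reuse factor and hence itself couples to $K$ through $\zeta^{\star}$; this extra coupling is precisely what raises the polynomial degree of the stationarity condition relative to the quadratic encountered for $N$.

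Next I would substitute this rational expression back into the objective to obtain $\mathrm{EE}$ as a single rational function of $K$, structurally identical to~\eqref{proofN1} and~\eqref{proofN}. Invoking \cite[Lem.~3]{Bjoernson2015a} exactly as before, I would argue that this function is quasi-concave in $K$, so that its unique interior stationary point is the global maximizer of the relaxed problem. Differentiating with respect to $K$ and clearing the strictly positive denominator then yields a polynomial equation; because $K$ enters the numerator of $\mathrm{ASE}$ linearly while appearing up to quadratically in $\mathrm{APC}$ and additionally inside $\zeta^{\star}$, the resulting first-order condition is the quintic $\sum_{i=0}^{5}p_{i}K^{i}=0$ with the stated coefficients, and $K_{0}$ is its relevant real root.

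Finally I would enforce feasibility. The constraint $1\le\zeta^{\star}\le K/\tau_{\mathrm{c}}$ inherited from~\eqref{MaximizationZetastar} translates into bounds on $K$: cross-multiplying the lower bound $\zeta^{\star}\ge 1$ gives a quadratic in $K$ whose two roots are $K_{1,1}$ and $K_{1,2}$, while the upper bound $\zeta^{\star}\le K/\tau_{\mathrm{c}}$ gives the linear threshold $K_{2}$. Projecting $K_{0}$ onto the admissible set produces the nested form $K^{\star}=\max(K_{2},\max(K_{1,1},\min(K_{0},K_{1,2})))$, after which one rounds to the nearest feasible integer as in Section~\ref{optimalNK}. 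I expect the main obstacle to be twofold: establishing quasi-concavity in $K$ is genuinely harder than in $N$, because of the additional $K$-dependence entering through the reuse factor, and since a quintic admits no general solution by radicals, the \emph{closed form} here is really a reduction to a quintic together with the correct selection of its maximizing root and the careful organization of the two-sided quadratic constraint into the stated $\max/\min$ clipping.
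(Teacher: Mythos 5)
Your overall architecture matches the paper's proof exactly: substitute $\zeta^{\star}$, express the common factor $\zeta^{\star}\tau_{\mathrm{c}}/K$ as a rational function of $K$, obtain $\mathrm{EE}$ as a single rational function of the form \eqref{EE_K}, derive the quintic from the first-order condition, and clip the stationary point against the two-sided constraint to get the $\max/\min$ nesting with $K_{1,1}$, $K_{1,2}$, $K_{2}$. You are also right that the quintic admits no solution by radicals; the paper concedes this and finds the roots numerically (bisection/Mathematica), and it does not actually establish quasi-concavity in $K$ — it relies on an exhaustive search over the feasible domain — so the difficulty you flag is real and is not resolved in the paper either.

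The one concrete place where you diverge is the treatment of the co-pilot term $A=\sum_{j=1}^{K}|\bpsi_{j}^{\H}\bpsi_{k}|^{2}$. You propose to identify $A$ with the pilot reuse factor, but since $\zeta^{\star}$ is precisely the quantity just optimized (and $A$ already sits inside $Q_{1}$ in the formula for $\zeta^{\star}$), that identification makes the substitution implicit and would not reproduce the specific coefficients $\{e_{i}\}$ of Table~\ref{ParameterValues4} or the stated $\{p_{i}\}$. The paper's device is different: it lower-bounds $A$ via the Welch inequality, $A \ge \frac{\tau_{\mathrm{tr}}(K-3)+K-1}{\tau_{\mathrm{tr}}(K-2)}$ (equation \eqref{welchbound}), treating $\tau_{\mathrm{tr}}$ as a fixed parameter. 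This is exactly what turns $\bar{e}$ into a ratio of two quadratics in $K$ (rather than the ratio of linear functions seen in the $N$-optimization) and hence what produces a degree-five rather than degree-two stationarity condition. Without this step, or an equivalent explicit rational expression for $A$ in terms of $K$ alone, your plan cannot derive the particular quintic coefficients asserted in the theorem, even though it would arrive at a quintic of the same general shape.
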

 \begin{table}
\caption{Optimization Parameters for Optimal  Number of Users $K$}
\begin{center}
                                             \begin{tabulary}{\columnwidth}{ | c | c | }\hline
{\bf Parameter} &{\bf Value}\\ \hline\hline
 $e_{1}$&  $\gamma N \rho_{\mathrm{d}} \left( \al-2 \right)$\\ \hline
$e_{2}$ & $\gamma N \rho_{\mathrm{d}} \left(1-3\left(\al-2\right)\right)$ \\ \hline
$e_{3}$&  $2\gamma \left( 1-\al \right)-2 b_{1}$ \\ \hline
 $e_{4}$ &  $-\al \pi N  \rho_{\mathrm{tr}}\tau_{\mathrm{c}} \rho_{\mathrm{d}}\gamma\left( \tau_{\mathrm{tr}}+\lambda_{\mathrm{AP}} -1 \right) $ \\ \hline
  $e_{5}$ & \! $\gamma \rho_{\mathrm{tr}}\tau_{\mathrm{c}}\!\left(\! \left( \al\!-\!2 \right)\!\left(  \tau_{\mathrm{tr}}\!+\!1 \right)\!+\!\al \pi N  \rho_{\mathrm{tr}}\tau_{\mathrm{c}} \rho_{\mathrm{d}}\!\left(\! \left(\!  \tau_{\mathrm{tr}}\!+\!\gamma \right)\! \right)\! \right)\!-\!e_{6}$ \\ \hline
   $e_{6}$ &  $\left( \al-2 \right)\left( \gamma-3  \tau_{\mathrm{tr}} \right) $ \\ \hline
     $e_{7}$ &  $\left(C_{0}+D_{0}N\right)/\lambda_{\mathrm{AP}}$ \\ \hline
     $e_{8}$ &  $\left(P_{\mathrm{COD}} + P_{\mathrm{DEC}} +P_{\mathrm{BT}}\right)/\lambda_{\mathrm{AP}}$ \\ \hline
     $e_{9}$ &  $\left(C_{2}-D_{2}N \right)/\lambda_{\mathrm{AP}}$ \\ \hline
        $e_{10}$ &  $\left(P_{\mathrm{COD}} + P_{\mathrm{DEC}} +P_{\mathrm{BT}}\right)/\lambda_{\mathrm{AP}}$ \\ \hline
 \end{tabulary}\label{ParameterValues4} 
                                         \end{center}
                                     \vskip -5mm
                                         \end{table}
                           \begin{proof}
We notice that the term $A=\sum_{i=1}^{K}|\bpsi_{i}^{\H}\bpsi_{k}|^{2}$, appearing in $\zeta^{\star}\tau_{\mathrm{c}}/K$, depends on $K$ by means of its superscript. In fact, $\zeta^{\star}\tau_{\mathrm{c}}/K$ is an increasing function regarding $A$. Hence, we apply  the  bound on $A$ by using the Welch inequality~\cite{Welch1974}, and we obtain
\begin{align}
 A \ge \frac{\tau_{\mathrm{tr}}\left( K-3 \right)+K-1}{\tau_{\mathrm{tr}}\left( K-2 \right)}\label{welchbound} 
\end{align}
since the summation  becomes $\sum_{i\ne k}^{K}|\bpsi_{i}^{\H}\bpsi_{k}|^{2}=\frac{K-1-\tau_{\mathrm{tr}}}{\tau_{\mathrm{tr}}\left( K-2 \right)}$ by using the inequality. Substituting~\eqref{welchbound} into~\eqref{proofN} and  rearranging with respect to $K$,  the objective funtion  can written as
 \begin{align}
 \!\!\mathrm{EE}\!\left(\zeta^{\star} \right)\!=\!\frac{\frac{K \xi}{S} \left( 1- \bar{e}\right)\log_{2}\left( 1+\gamma\right)}{e_{7}\!+\!e_{8} K\!+\!e_{9} K^{2}\!+\!e_{10} \xi \!\left( 1\!-\! \bar{e}\right)\log_{2}\!\left( 1\!+\!\gamma\right )},\label{EE_K} 
 \end{align}
  where $ \bar{e}=\frac{e_{1}K^{2}+e_{2}K+e_{3}}{e_{4}K^{2}+e_{5}K+e_{6}} $ while the parameters $\{e_{i}\}$ are provided in Table~\ref{ParameterValues4}.
 Taking the  first derivative of~\eqref{EE_K} with respect to $K$ and equating it to zero, we obtain a  polynomial fifth degree with roots provided by an  exhaustive search over the domain set while using a bisection method and the help of Mathematica~\cite{Mathematica}.  We obtain three real roots and one pair of complex roots. Note that the constraint results in $ K_{2} $.\end{proof}

  \section{Numerical Results}\label{Numerical} 
This section presents illustrations of the analytical results provided by means of Theorems~\ref{ReuseFactor}-\ref{optimalKK}  concerning the optimal EE. Notably, the tightness of the derived bounds,  denoting their values as good approximations, is demonstrated in Fig.~\ref{Fig3}  by MC simulations. For the sake of comparison, we have considered a conventional ``cellular'' mMIMO scenario and  a  SCs architecture. Especially,  in SCs, the effective channel power does not harden while in the mMIMO architectures we observe  the signal power tending to its mean as the number of APs becomes large~\cite{Marzetta2010, Ngo2017}. Hence, SCs require both UL and DL training phases, i.e., the length for training in SCs is doubled. Furthermore, CF mMIMO systems enjoy  favorable propagation, and thus, they can achieve optimal performance with simple linear processing. Also, the co-processing, taking place in CF systems, suppresses the inter-cell interference degrading the performance of SCs~\cite{Interdonato2019}.

We consider a sufficiently large squared area of $S=1~\mathrm{km}^{2}$, where the  locations of the   APs  are simulated as realizations of  the PPP $\Phi_{\mathrm{AP}}$ with density $\lambda_{\mathrm{AP}}=100~\mathrm{APs}/\mathrm{km}^{2}$  based on a wraparound topology to keep the translation invariance. We assume that the system bandwidth is $B_{\mathrm{w}}=20~\mathrm{MHz}$ and that each coherence block consists of $\tau_{\mathrm{c}}=200$ samples corresponding to a coherence bandwidth of $ 200~\mathrm{KHz} $ and a coherence  time of $ 1~\mathrm{ms} $ \cite{Ngo2017}.
 Moreover, we assume that   $N=20$ antennas per AP and $K=10$ users in total while $\zeta=4$.  Also, we assume that $ \rho_{\mathrm{tr}}=100~\mathrm{mW}$, $ \rho_{\mathrm{d}}=200~\mathrm{mW}$, $\al=4$, and $\xi=1/3$. Moreover, the normalized UL training transmit power per pilot symbol $\bar{\rho}_{\mathrm{tr}}$ and DL transmit power ${\bar{\rho}_{\mathrm{d}}}$ result by dividing ${\rho}_{\mathrm{tr}}$ and ${{p}_{\mathrm{d}}}$ with the noise power ${N_\mathrm{P}}$ given in $\mathrm{W}$ by ${N_\mathrm{P}} = \kappa_{\mathrm{B}}W_{\mathrm{c}} T_{0}{N_\mathrm{F}}$. For the sake of reference, the descriptions and values of the various system parameters are found in Table~\ref{ParameterValues1} unless otherwise stated. Note that the circuit power parameters have been taken from~\cite{massivemimobook}.
\begin{table*}
\caption{Parameters Values for Numerical Results~}
	\begin{center}
		\begin{tabulary}{\columnwidth}{ | c | c | |c|c|}\hline
			{\bf Description} &{\bf Values}&	{\bf Description} &{\bf Values}\\ \hline
			Number users& $K=10$&Boltzmann constant & $\kappa_{\mathrm{B}}=1.381\times 10^{-23}~\mathrm{J/K}$ \\ \hline
			Number of Antennas/AP & $N=20$&Noise temperature & $T_{0}= 290~\mathrm{K}$ \\ \hline
			AP density & $\lambda_{\mathrm{AP}}=100~\mathrm{APs/km^{2}}$&Noise temperature & $T_{0}= 290~\mathrm{K}$ \\ \hline
						Communication bandwidth & $W_{\mathrm{c}} = 20~\mathrm{MHz}$&Noise figure & ${N_\mathrm{F}}=9~\mathrm{dB}$ \\ \hline	
			Carrier frequency &  $f_{0} = 1.9~\mathrm{GHz}$&Fixed power&  $ P_{\mathrm{FP}}=5~\mathrm{W}$\\ \hline
			Power per pilot symbol & ${\rho}_{\mathrm{tr}}=100~\mathrm{mW}$&Power for AP LO & $P_{\mathrm{LO}}=0.1~\mathrm{W}$\\ \hline
			DL transmit power & ${{\rho}_{\mathrm{d}}}=200~\mathrm{mW}$&Power per AP antenna  &  $P_{\mathrm{AP}}=0.2~\mathrm{W}$\\ \hline
			Path loss exponent & $\al=4$&Power per UE antenna   &  $P_{\mathrm{UE}}=0.1~\mathrm{W}$\\ \hline
			Coherence bandwidth & $B_{\mathrm{c}}=200~\mathrm{KHz}$  
			&Power for data coding   &  $P_{\mathrm{COD}}=0.01~\mathrm{W}/\left( \mathrm{Gbit/\mathrm{s}} \right)$ \\ \hline
						Coherence time & $T_{\mathrm{c}}=1~\mathrm{ms}$
		&Power for data decoding   &  $P_{\mathrm{DEC}}=0.08~\mathrm{W}/\left( \mathrm{Gbit/\mathrm{s}} \right)$
		\\ \hline
		UL training duration& $\tau_{\mathrm{tr}}=10$ samples
		&AP computational efficiency  &   $L_{\mathrm{AP}}=750~\mathrm{Gflops}/\mathrm{W} $\\  \hline
		UL training duration is SCs & $\tau_{\mathrm{tr}}=10$ samples
		&Power for backhaul traffic   &  $P_{\mathrm{BT}}=0.025~\mathrm{W}/\left( \mathrm{Gbit/\mathrm{s}} \right)$\\ \hline
			DL training duration is SCs & $ \tau_{\mathrm{d}}=10$ samples&Power amplifier efficiency  &$  \al_{\mathrm{eff}}=0.5 $ 
			\\ \hline
				\end{tabulary}\label{ParameterValues1} 
	\end{center}
\end{table*}

Firstly, we assess the EE 
by varying the pilot reuse factor $\zeta$ and AP density $\lambda_{\mathrm{AP}} $ for a given pair of $K$, $N$ in a CF mMIMO setting. Specifically, in~Fig.~\ref{Fig1} and in line with Theorem~\ref{ReuseFactor}, it is shown that the EE 
is a pseudo-concave function with respect to $\zeta$ with a unique global maximum at $\zeta^{\star}=3$ while the corresponding optimal EE 
 is $\mathrm{EE}^{\star}=5.92~\mathrm{Mbit/Joule}$. Regarding the AP density, the EE 
  is a quasi-concave function with respect to $\lambda_{\mathrm{AP}}$ as was stated by Theorem~\ref{OptimalLambda}. Nevertheless, this figure shows  the optimal AP density  to achieve maximum EE. Hence, we observe that when $\lambda_{\mathrm{AP}}=25~\mathrm{APs/km^{2}}$, the EE   takes its maximum value. It is worthwhile to mention that the optimal $ \lambda_{\mathrm{AP}} $  depends on fundamental system parameters such as the transmit power and the number of antennas per AP as the corresponding theorem shows.
   
In Fig.~\ref{fig 31}, we illustrate the cellular scenario, where we have assumed an AP with $N=20$ antennas is located per cell and $K=10$ users are served  in total.
In fact, we have relied on a similar work~\cite{Bjoernson2016}, studying the UL transmission of a cellular network with PPP distributed BSs, in order to simulate  the EE  
for the DL. Notably, the outperformance of the CF mMIMO setting is depicted. In particular, in CF mMIMO systems, the EE  
is higher and the required AP density is much lower.  In addition, in the case of SCs, studied in Fig.~\ref{fig 32}, we have  considered the system model in~\cite{Papazafeiropoulos2017}, where  independent users are associated with their nearest multi-antenna AP, while the remaining APs act as interferers. In particular, we have set  $N=4$ antennas per AP serving a single user, i.e., $K=1$. Also,  the imperfect CSI model in that scenario is replaced by the current one while no hardware impairments and channel aging have been assumed. Especially, we have denoted $\bar{\rho}_{\mathrm{tr}}^{\mathrm{sc}}= \bar{\rho}_{\mathrm{tr}}$ and ${\bar{p}_{\mathrm{d}}}^{\mathrm{sc}}=\frac{N}{K}{\bar{p}_{\mathrm{d}}}$, where $\bar{\rho}_{\mathrm{tr}}^{\mathrm{sc}}$ and ${\bar{p}^{\mathrm{d}}}_{\mathrm{sc}}$ are the normalized UL training and DL transmit powers in the case of SCs, in order to guarantee that the total radiated power is  equal in both architectures~\cite{Ngo2017}. Clearly, the  EE   is maximized after a  large AP density, being $ \lambda_{\mathrm{AP}}=70~\mathrm{APs}/\mathrm{km}^{2}$, while in the case of CF systems we need only $ \lambda_{\mathrm{AP}}=25~\mathrm{APs}/\mathrm{km}^{2}$.

 \begin{figure}[!h]
 \begin{center}
 \includegraphics[width=0.9\linewidth]{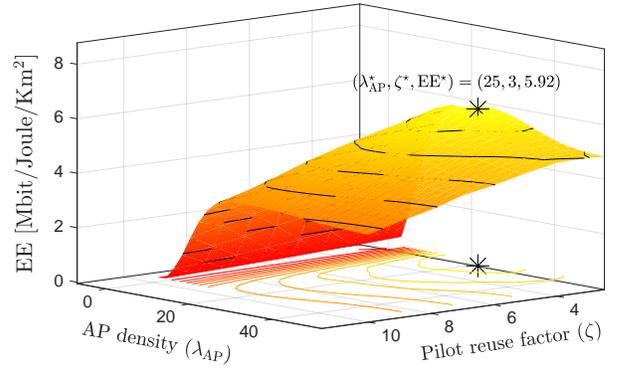}
 \caption{\footnotesize{EE ($\mathrm{Mbit/J}$) of CF mMIMO systems versus the AP density  $\lambda_{\mathrm{AP}}$ and pilot reuse factor $\zeta$. The optimal EE  
 		is star-marked and the corresponding parameters are provided.}}
 \label{Fig1}
 \end{center}
 \end{figure}

 \begin{figure}
 	\centering     
 	\subfigure[``Cellular'' mMIMO systems]{\label{fig 31}\includegraphics[width=0.9\linewidth]{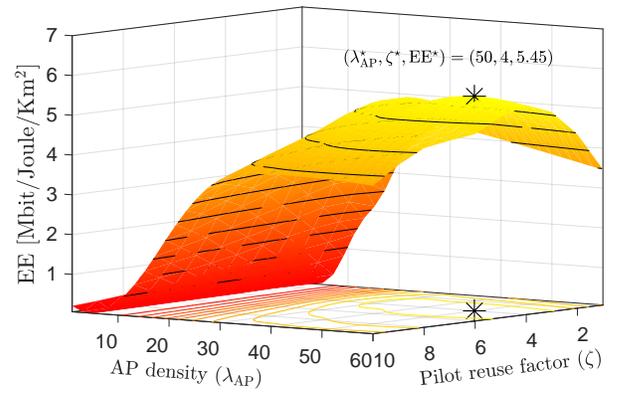}}
 	\subfigure[SCs  systems ]{\label{fig 32}\includegraphics[width=0.9\linewidth]{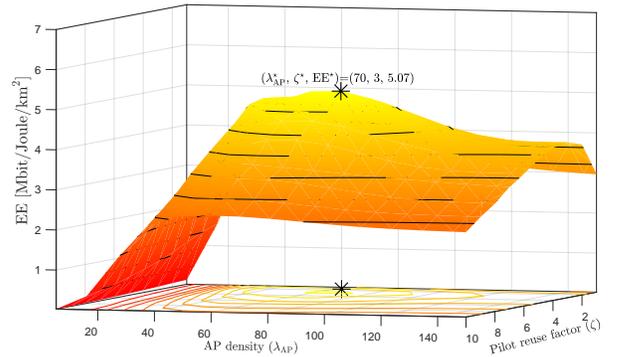}}
 	\caption{EE   ($\mathrm{Mbit/J }$) of  versus the AP density  $\lambda_{\mathrm{AP}}$ and pilot reuse factor $\zeta$ in the cases of a) ``cellular'' mMIMO systems and b) SCs systems, respectively. The optimal EE	is star-marked and the corresponding parameters are provided.}
 \end{figure}

In Fig.~\ref{Fig3}, we examine the impact of the SINR constraint $\gamma_{0}$ on the EE. Moreover, we shed light on the tightness of the lower bound on the average SE given  by Theorem~\ref{PropDetSINRDistances2} and an upper bound provided by averaging the instantaneous SE presented by Lemma~\ref{lemma1} in terms of MC simulations as in \cite{Bjoernson2016}.
 In particular, we assume that $\gamma_{0} \in \{1,~3,~7\}$ to result in an average SE $ \log_{2} \left( 1+ {\gamma}_{0}\right) $ equal to~$1$, $2$, and $3$, respectively. Obviously, the EE  
 decreases with $\gamma_{0}$, which notifies the importance of a target SINR keeping the quality of service in terms of the achievable SE at a satisfactory level according to the specified requirements. Otherwise, we will result in a highly energy-efficient system but useless from the user perspective due to low SE. Furthermore, the gap between the lower and upper bounds is small, which signifies the tightness of the bound proposed by Theorem~\ref{PropDetSINRDistances2}. Hence, the various approximations, employed for the derivation of this bound, provide reliable results. This observation was expected because the bounding techniques provide tight approximations for a large but finite number of APs, which is the case in CF mMIMO systems. Again, it is depicted that the EE 
  increases with $\lambda_{\mathrm{AP}}$ up to a maximum point,  $\lambda_{\mathrm{AP}}=30~\mathrm{APs/km^{2}}$ equivalent to the distance among the APs of $103~\mathrm{m}$ approximately, which is reasonable for practical deployments.

 \begin{figure}[!h]
 \begin{center}
 \includegraphics[width=0.9\linewidth]{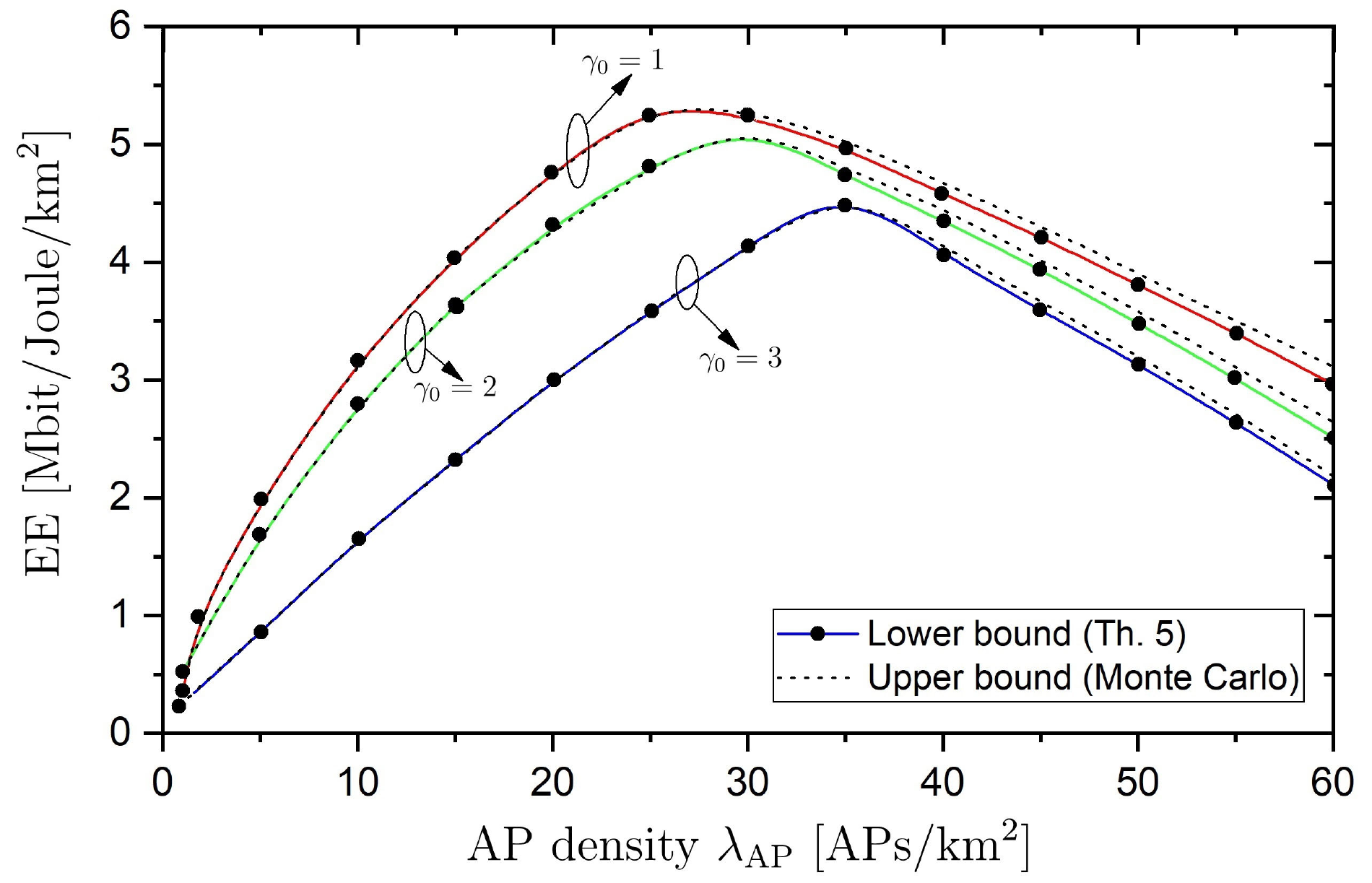}
 \caption{\footnotesize{EE   ($\mathrm{Mbit/J}$) of CF mMIMO systems versus the AP density  $\lambda_{\mathrm{AP}}$ for different SINR constraints. ``Solid-bullet'' and ``dashed'' lines correspond to the lower bound due to the Theorem 1 and upper bound due to MC simulation of the average SE.}}
 \label{Fig3}
 \end{center}
 \end{figure}

In Figs.~\ref{Fig4} and \ref{Fig41}, we depict the EE  
 as a function of the ASE for varying $\gamma_{0}$ for $K=10 $ and $ K=20 $ users, respectively. From both figures, it is obvious that after a certain value of the ASE the EE  
  decreases, while before, the EE  
  increases together with the ASE. Hence,  there are design conditions that could be specified,  in order to achieve maximum EE  
   and ASE simultaneously without sacrificing the one over the other. Also, it is shown that these conditions depend on the SINR constraint $\gamma_{0}$ since its increment, being equivalent to an increase of ASE, results in the decrease of the EE 
    as has been already noticed. Regarding the impact of the number of users, we observe that the EE is reduced with $ K $. This is easily shown by \eqref{Maximization}, which for a given $\gamma_{0}$, $ \mathrm{EE} $is a decreasing function of $ K $ because the ASE increases slower than the increase of the APC with the number of users. However, it is shown that we can achieve the optimal EE at higher ASE because increasing the number of users, the ASE increases.

\begin{figure}
	\centering     
	\subfigure[$ K=10 $ users ]{\label{Fig4}\includegraphics[width=0.9\linewidth]{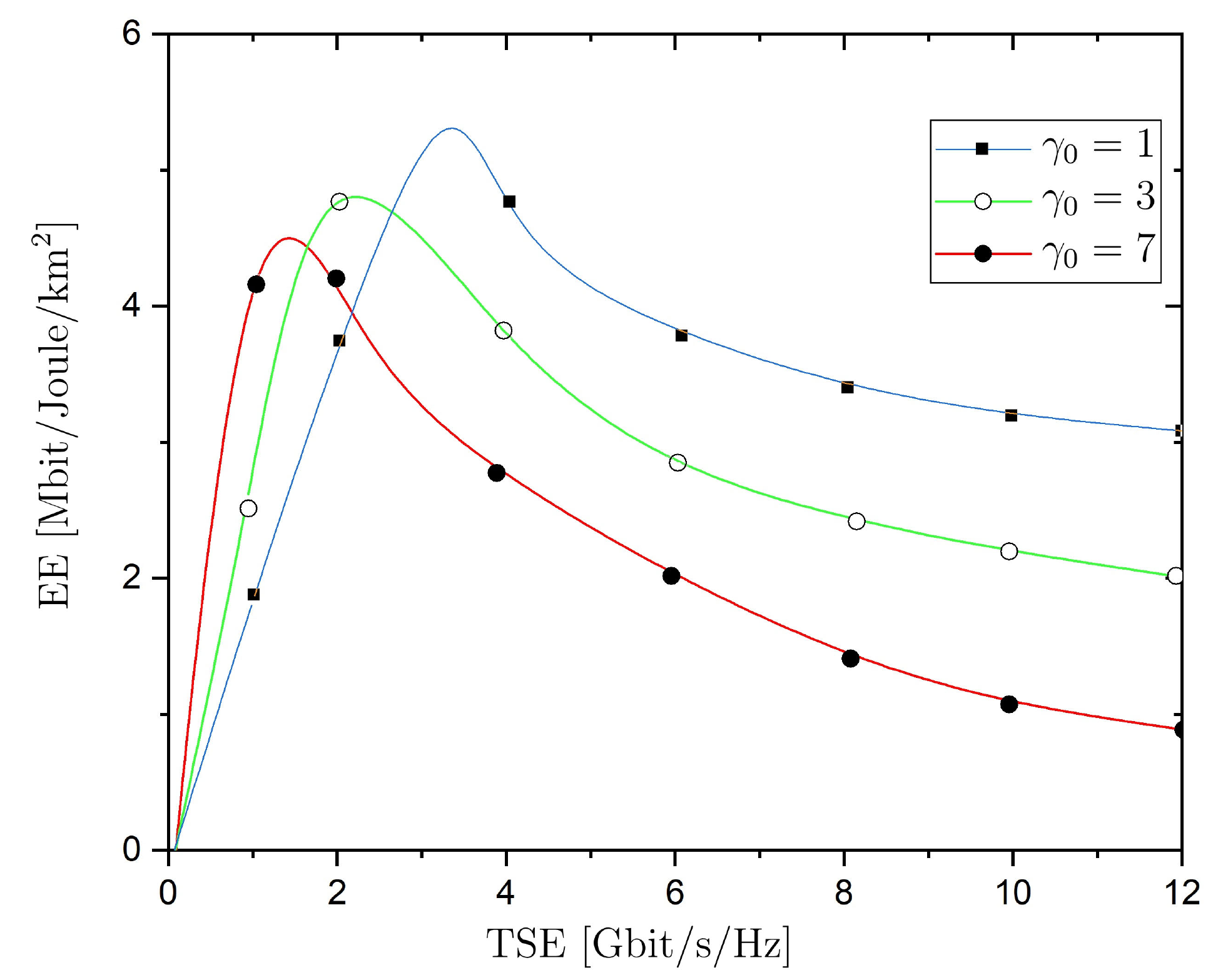}}
	\subfigure[$ K=20 $ users  ]{\label{Fig41}\includegraphics[width=0.9\linewidth]{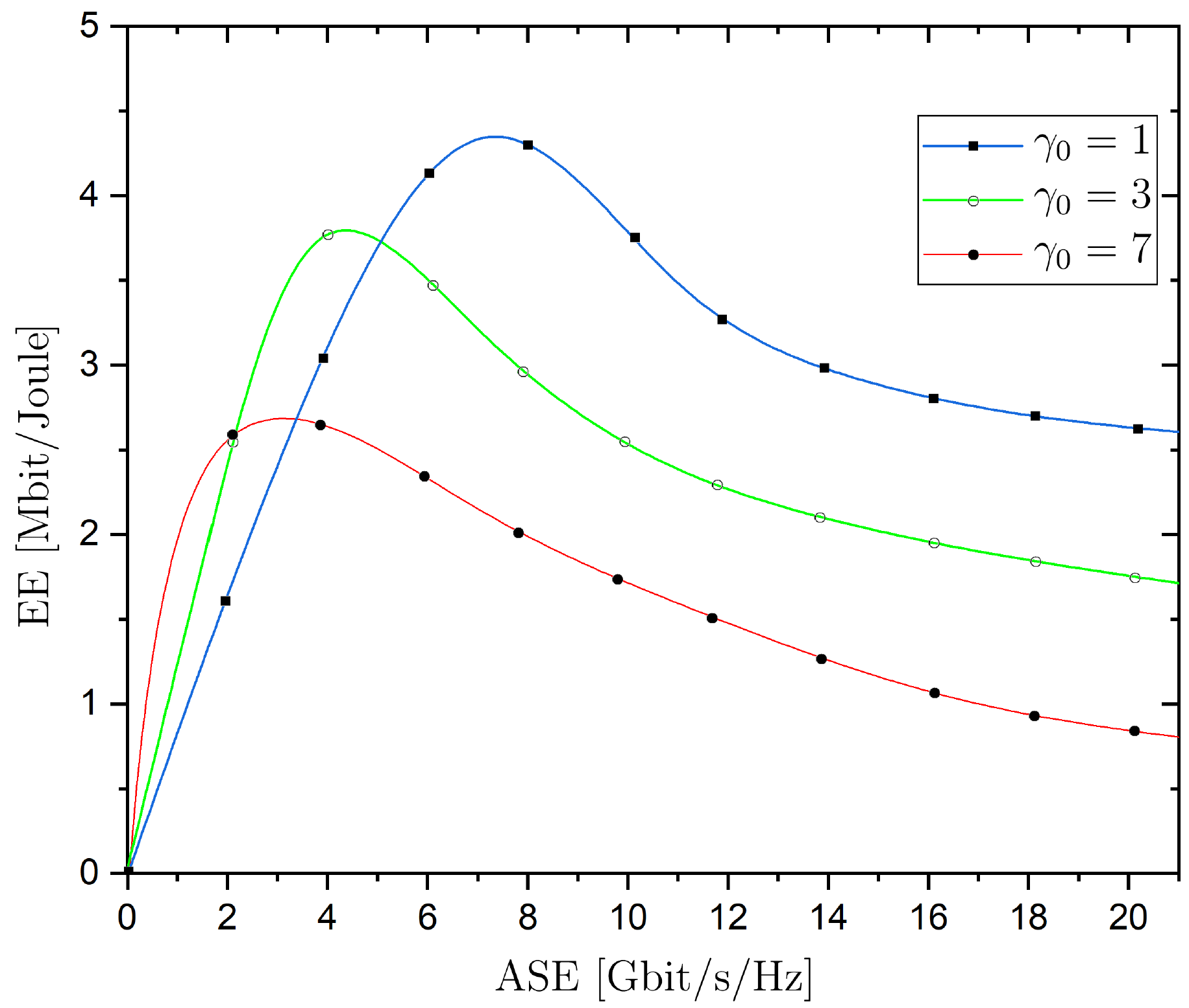}}
	\caption{EE ($\mathrm{Mbit/J}$) of CF mMIMO systems versus the $\mathrm{ASE}$ for different SINR constraints a) $ K=10 $ users and b)  $ K=20 $ users, respectively. }
\end{figure}

Fig.~\ref{Fig5} presents the EE  
 as a function of the  number of antennas per AP $N$ and the number of users $K$ when the average SE is equal to $2~\mathrm{Gbit/s}$, i.e., $\gamma_{0}=3$. Regarding the other parameters under optimization, being the pilot reuse factor and AP density, they are chosen based on Theorems~\ref{ReuseFactor} and Theorem~\ref{OptimalLambda}. Specifically, the optimal values are obtained as $\zeta^{\star}=3$ and  $\lambda_{\mathrm{AP}}^{\star}=25$. We verify that the EE  
  is a pseudo-concave  function of both $K$ and $N$. Based on simulation, the optimal values are given by~$\left( K^{\star},~N^{\star} \right)=\left( 5,~16 \right)$ while the corresponding maximum EE   
  is $\mathrm{EE}^{\star}=6.76~\mathrm{Mbit/Joule}$. Notably, these values are confirmed analytically by means Theorems~\ref{optimalNN} and~\ref{optimalKK}.

  \begin{figure}[!h]
 \begin{center}
 \includegraphics[width=0.9\linewidth]{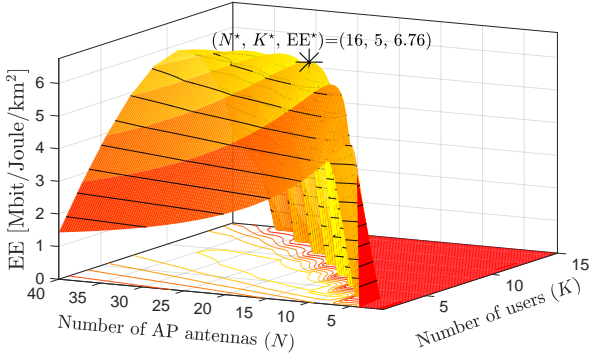}
 \caption{\footnotesize{EE ($\mathrm{Mbit/J}$) of CF mMIMO systems versus the number of AP antennas $N$ and users $K$.  The optimal EE 
 		 has a black triangle and the corresponding parameters are provided.}}
 \label{Fig5}
 \end{center}
 \end{figure}

 \section{Conclusion} \label{Conclusion} 
 Given that network densification is a promising way for high EE, we considered its investigation in a CF mMIMO architecture by assuming both many APs and many antennas per AP. In order to rely on a realistic scenario, we assumed that the APs are PPP distributed. In parallel, we introduced a realistic power consumption model for this setting.  Notably, we achieved to derive a new lower bound on the DL average SE for CF mMIMO systems and  we provided a novel definition for the ASE   which is necessary in the case of CoMP-JT architectures. In this direction, we formulated an EE  maximization problem for  the DL that enabled the analytical determination of tractable closed-form expressions regarding the optimal EE  with respect to the pilot reuse factor, the AP density as well as the number of users and antennas per AP. Remarkably, we achieved to obtain valuable insights concerning the optimization variables. Indeed, the densification  in terms of the AP number up to a specific value increases the EE.
Moreover, the EE is increased up to a certain point by equipping the APs with more antennas due to the higher array and multiplexing gains that manage to mitigate interference and achieve the resultant higher data rate and lower power consumption. 
\begin{appendices}

\section{Proof of Proposition~\ref{PropDetSINR}}\label{SINRproofDistances}
This proof aims at the derivation of $\gamma_{k}$ for finite $M$ by recalling each term of~\eqref{SINR15}. In particular, for the derivation of each term, we are going to use that $\bx^{\H}\by = \tr(\by \bx^{\H})$ for any vectors $\bx$, $\by$.
The term in the numerator becomes 
\begin{align}
 \EE\left[
 \bh_{k}^{\H}\bC_{k}\hat{\bh}_{k}\right]&=\tr\left( \EE\left[\bC_{k}\hat{\bh}_{k} {\bh}_{k}^{\H} \right]  \right) \\
 &=\tr\left( \EE\left[\bL_{k}^{-1}\tilde{\by}_{k} {\bh}_{k}^{\H} \right]  \right)\label{term1}\\
 &=NM\label{term2},
\end{align}
where~\eqref{term1} results after substituting $\hat{\bh}_{k}=\bL_{k} \bD^{-1}\tilde{\by}_{k}$ with 
 $\tilde{\by}_{k}= \big[\tilde{\by}_{1k},\dots,~\tilde{\by}_{Mk}\big]^\T $ from~\eqref{eq:Ypt3} while $\bC_{k}=\bC^{-1}\bD\bL_{k}^{-2}$. The last step is accomplished by applying the expectation between $\tilde{\by}_{k}$ and ${\bh}_{k}^{\H}$. When $i\ne k$, the second-order moment, appearing in the denominator, is written as
\begin{align}
&\EE\left[ \Big| {\bh}_{k}^{\H}\bC_{i}\hat{\bh}_{i}\Big|^{2}\right]=\EE\left[ \Big| \hat{\bh}_{k}^{\H}\bC_{i}\hat{\bh}_{i}\Big|^{2}\right]+\EE\left[ \Big| \tilde{\bee}_{k}^{\H}\bC_{i}\hat{\bh}_{i}\Big|^{2}\right]\label{52} \\
  &=\EE\left[ \Big| \hat{\bh}_{i}^{\H}\bL_{k}\bL_{i}^{-1}\bC_{i}\hat{\bh}_{i}\Big|^{2}\right]+\EE\left[ \Big| \tilde{\bee}_{k}^{\H}\bC_{i}\hat{\bh}_{i}\Big|^{2}\right]\label{53} \\
   &=N^{2}\!\left( \tr ^{2}\!\!\left( \bL_{k}\bL_{i}^{-1}\right)+\tr\!\left(  \bL_{k}^{2}\bL_{i}^{-2}  \right)+\tr\!\left(\bC_{i}\! \left( \bL_{k}-\bD^{-1}\bL_{k}^{2} \right) \right) \right)\nn\\
   &=N^{2}\left( \tr ^{2}\left( \bL_{k}\bL_{i}^{-1}\right)+\tr\left(\bC_{i} \bL_{k}\right) \right)\!,
\end{align}
where in~\eqref{52}, we have used that $\bh_{k}=\hat{\bh}_{k}+\tilde{\bee}_{k}$ and the identity $\EE\left[ |X+Y|^{2}\right] =\EE\left[ |X|^{2}\right] +\EE\left[ |Y^{2}|\right]$ holding between two independent random variables with $\EE\left[ X\right]=0 $.  In~\eqref{53}, we have applied the property concerning the  estimated channels between pilot contaminated users, i.e., $\hat{\bh}_{k}=\bL_{k}\bL^{-1}\hat{\bh}_{i}$~\cite{Nayebi2016}. The first part of the next equality follows by using \cite[Lemma~2]{Bjornson2015}, and the second part, not depending on the contamination,  results due to the independence between the two random vectors. The last equation is obtained by simple algebraic manipulations.
On the contrary, if $i=k$, we have
\begin{align}
  \EE\left[ \Big| {\bh}_{k}^{\H}\bC_{i}\hat{\bh}_{i}\Big|^{2}\right]&=\EE\left[ \Big| \hat{\bh}_{k}^{\H}\bC_{k}\hat{\bh}_{k}\Big|^{2}\right] \\
   &=\tr ^{2}\Id_{MN}+\tr\Id_{MN}\label{55}\\
   &=M^{2}N^{2}+MN,
\end{align}
where in~\eqref{55}, we have applied \cite[Lemma~2]{Bjornson2015}.
%
In total, we have
\begin{align}
 \EE\left[ \Big| {\bh}_{k}^{\H}\bC_{i}\hat{\bh}_{i}\Big|^{2}\right]=N^{2}\tr^{2}\left(\bL_{i}^{-1}  \bL_{k}\right)
 +\begin{cases}
                                     N^{2}\tr \left(\bC_{i}\bL_{k}\right), &             i\ne k\\
 MN, &		i=k.
                                      \end{cases}\label{term5}
\end{align}
Also, the normalization parameter can be easily written as
\begin{align}
\mu &= \frac{K}{\EE\Big[\sum_{i=1}^{K}\hat{\bh}_{i}^{\H}\bC_{i}^{2}\hat{\bh}_{i}\Big]}\nn\\
&= \left( \frac{N}{K}\sum_{i=1}^{K}\tr\bC_{i} \right)^{-1}\label{term6}
\end{align}
The proof is concluded by susbstituting~\eqref{term2},~\eqref{term6} and~\eqref{term5} into~\eqref{SINR15}.

\section{Proof of Theorem~\ref{PropDetSINRDistances2}}\label{SINRproofDistances2}
The proof starts with the application of Jensen's inequality that will allow us to derive a tractable lower bound of the  DL average SE by moving the expectation inside the logarithm and continues with  the derivation of the expectation of the inverse SINR over the APs dinstances.\\
Application of the Jensen inequality to the DL average SE results in 
\begin{align}
 \EE\left[\log_{2}\left( 1+\frac{1}{\bar{\gamma}_{k}^{-1}} \right) \right] \ge \log_{2} \left( 1+\frac{1}{\check{\gamma}_{k}} \right)\!,
\end{align}
where the expectation
applies directly to the inverse SINR  $\check{\gamma}_{k}={\EE\left[\bar{\gamma}_{k}^{-1} \right]}$.\\
The inverse  SINR provided by~\eqref{SINRFinite} can be written as \eqref{SINRFiniteproof11} at the top of the next page.
	\begin{figure*}
\begin{align}
 \EE\left[ \bar{\gamma}_{k}^{-1} \right] &\!=\!\EE\Bigg[\frac{\displaystyle\sum_{i=1}^{K}\sum_{m=1}^{M} \!\!\left(d_{m}l_{mi}^{-2}\!\left(\!N l_{mk}\!+\!\frac{1}{K p_{\mathrm{d}}}\right)\!\!\right)\!+\!N\sum_{i \ne k}^{K}\!\!\left( \sum_{m=1}^{M}\! l_{mk}l_{mi}^{-1} \right)^{\!2}\!-\!N\sum_{m=1}^{M} \!d_{m}l_{mk}^{-1}\!+\!M}{M^{2}N}\Bigg] \!,\label{SINRFiniteproof11}
\end{align}
	\hrulefill
\end{figure*}
where   the trace of each matrix is replaced by the sum of its entry-wise  elements.
For the derivation of  the expectation, let a ball of radius $R$ centered at the  origin that contains $M=\Phi\left( B\!\left( o,R \right) \right)$ points with $ S=|B\!\left( o,R \right)\!|$. The first step  includes conditioning on this area of radius $R$ and on the number of points in this  area. Next, we apply  the law of large numbers.
Afterwards, we remove the conditioning regarding the number of points, and  we assume that  the area is  infinite, i.e., $R \to \infty$. Specifically, we have
\begin{align}
 \!\!\!\EE\!\left[ \bar{\gamma}_{k}^{-1}\right]
 &\!=\!\!\lim_{R \to \infty} \!\! \EE\Bigg[\!\frac{1}{{M^{2}N}}{\displaystyle\sum_{i=1}^{K}\!\sum_{m\in \Phi_{\mathrm{AP}}\cap B\left( o,R \right)}^{M}\! \!\!\!\!\!\!\!\!\!\!d_{m}l_{mi}^{-2}\!\left(\!\!N l_{mk}\!+\!\frac{1}{K p_{\mathrm{d}}}\right)}{}\!\!\Bigg]\!\!\nn\\
 &+\!\lim_{R \to \infty} \! \EE\Bigg[\frac{1}{{M^{2}}}{\!\displaystyle\sum_{i \ne k}^{K}\!\!\left( \sum_{m\in \Phi_{\mathrm{AP}}\cap B\left( o,R \right)}^{M}\!\!\!\!\!\!\!l_{mk}l_{mi}^{-1}\! \right)^{\!\!\!\!2}}{}\Bigg]\nn\\
 &  +\lim_{R \to \infty} \! \EE\Bigg[\frac{1}{M^{2}}\!\!\!\sum_{m\in \Phi_{\mathrm{AP}}\cap B\left( o,R \right)}^{M} \!\!\!\!\!\!\!\!d_{m}l_{mk}^{-1}\!\Bigg]\!+\!\EE\left[ \frac{1}{MN}\right], \!\label{SINRFiniteproof13}
  \end{align}
where in \eqref{SINRFiniteproof13}, we have let the ball of radius $R$ going to infinity. We continue with the computation of the first term of~\eqref{SINRFiniteproof13}. We have
\begin{align}
& \mathcal{I}_{1}=\lim_{R \to \infty} \! \EE\Bigg[\frac{1}{{M^{2}N}}{V}{}\Bigg]\nn\\
 &=\lim_{R \to \infty} \EE_{M}\!\left[\!\EE_{|M} \!\!\left[\frac{1}{{M^{2}N}}{V}{}|M=\Phi\!\left( B\left( o,R \right) \right)\right]\right] \label{SINRFiniteproof14}\\
  &=   \sum_{i=1}^{K}\lim_{R \to \infty} \EE_{M}\!\left[\!\EE\!\left[  M\frac{1}{{MN}}{d_{m}l_{mi}^{-2}\!\left(\!\!N l_{mk}\!+\!\frac{1}{K p_{\mathrm{d}}}\right)}{}\right] \right]  \label{SINRFiniteproof15}\\
  &=   \sum_{i=1}^{K}\EE\!\left[  \frac{1}{{N}}{d_{m}l_{mi}^{-2}\!\left(\!\!N l_{mk}\!+\!\frac{1}{K p_{\mathrm{d}}}\right)}{}\right] \label{SINRFiniteproof16}\\
  &=\mathcal{I}_{11}+\mathcal{I}_{12},\label{SINRFiniteproof161}
\end{align}
where  $ V\!=\!\displaystyle\sum_{i=1}^{K}\!\sum_{m\in \Phi_{\mathrm{AP}}\cap B\left( o,R \right)}^{M}\! \!\!\!\!\!\!\!\!\!\!\!\!d_{m}l_{mi}^{-2}\!\left(\!\!N l_{mk}\!+\!\frac{1}{K p_{\mathrm{d}}}\right) $, $ \mathcal{I}_{11}\!=\!  \displaystyle\!\sum_{i=1}^{K}\!\EE\!\left[  d_{m}l_{mi}^{-2} l_{mk}\right]$, and 
$ \mathcal{I}_{12}\!= \! \frac{1}{{KN p_{\mathrm{d}}}}\!\displaystyle \sum_{i=1}^{K}\!\EE\!\left[ {d_{m}l_{mi}^{-2}}{}\right]$.
In \eqref{SINRFiniteproof14}, we compute the conditional expectation given the number of points inside the ball, while in \eqref{SINRFiniteproof15}, one $M$ in the denominator cancels out with the number of points inside the ball, and then, we  apply the law of large numbers given the number of APs.  Next, we derive $\mathcal{I}_{11}$ as
\begin{align}
 &\!\!\mathcal{I}_{11}= \EE\left[ \sum_{i=1}^{K}\left( \sum_{j=1}^{K}|\psi_{j}\psi_{k}^{\H}|^{2}l_{mj}+\frac{1}{{\tau_{\mathrm{tr}} \rho_{\mathrm{tr}}}} \right)l_{mi}^{-2}l_{mk}\right]\label{SINRFiniteproof162}\\
 &\!\!=\! \EE\!\!\left[\!\! \sum_{i=1}^{K}\!\! \sum_{j=1}^{K}\!\!|\psi_{j}\psi_{k}^{\H}|^{2} l_{mj}l_{mi}^{-2}l_{mk}\!\right]\!+\!\frac{1}{{\tau_{\mathrm{tr}} \rho_{\mathrm{tr}}}}\EE\!\left[  \sum_{i=1}^{K} l_{mi}^{-2}l_{mk}\right]\label{Expectationequation}\!, 
\end{align}
where the first part of~\eqref{Expectationequation}  can be written as
\begin{align}
 &\EE\left[ \sum_{i=1}^{K} \sum_{j=1}^{K}|\psi_{j}\psi_{k}^{\H}|^{2} l_{mj}l_{mi}^{-2}l_{mk}\right]
\nn\\
&=\left\{\begin{array}{ll}
 \sum_{i=1}^{K}  |\psi_{i}\psi_{k}^{\H}|^{2}    \EE \left[l_{mi}^{-1}l_{mk}\right]&~\mathrm{if}~j=i\\
 \sum_{i=1}^{K}\sum_{k=1}^{K}  \EE \left[  l_{mi}^{-2}l_{mk}^{2} \right] &~\mathrm{if}~j= k\\
 \sum_{j\ne i,k}^{K}
 |\psi_{j}\psi_{k}^{\H}|^{2}  \EE \left[ l_{mj}  l_{mi}^{-2}l_{mk} \right] &~\mathrm{otherwise}
 \end{array}.\!\label{Expectationequation1}
\right.
\end{align}
If $i\ne k$, the expectation in the first branch  results in
\begin{align}
 \EE \left[l_{mi}^{-1}l_{mk} \right]&=  \EE \left[ \frac{1}{ l_{mi}^{}}\right] \EE\left[ l_{mk}^{}\right] \label{firstPart1} \\
 & \ge \frac{1}{\EE \left[  l_{mi}^{}\right]}\EE\left[  l_{mk}^{}\right]\label{firstPart3}\\
 &=1\label{firstPart2},
\end{align}
where~\eqref{firstPart1} is obtained due to the independence between the random variables $l_{mi}$ and $l_{mk}$ while~\eqref{firstPart3} has accounted for Jensen's inequality. Notably,~\eqref{firstPart2} is obtained since the two variables have the same marginal distribution. In the condition that $i=k$, the result is the same. Following the same procedure, the expectation in the second branch gives the same result. The expectation in the  last branch becomes
\begin{align}
\EE \left[  l_{mj}^{}  l_{mi}^{-2}l_{mk}^{} \right] &=   \left\{\begin{array}{ll}
                  \EE \left[  l_{mj}^{}  l_{mk}^{-1} \right]&~\mathrm{if}~i=k\\
                   \EE \left[  l_{mj}^{}  l_{mi}^{-2}l_{mk}^{} \right]&~\mathrm{if}~i\ne k
                   \end{array}\label{Expectationequation2}
\right. .
            \end{align}
The first branch is identical to~\eqref{firstPart1}, and gives the same expression. The second branch gives
\begin{align}
\EE \left[  l_{mj}^{}  l_{mi}^{-2}l_{mk}^{} \right]&=\EE \left[  l_{mj}^{} \right] \EE\left[  l_{mi}^{-2}\right] \EE\left[ l_{mk}^{} \right]\label{thirdPart1}\\
&\ge \EE \left[  l_{mj}^{} \right] \EE\left[  l_{mi}^{-1}\right] ^{2}\EE\left[ l_{mk}^{} \right]\label{thirdPart2}\\
&\ge 1,\label{thirdPart3}
\end{align}
In~\eqref{thirdPart1}, we have taken into consideration the independence among the variables, and then, in~\eqref{thirdPart2} we have applied  the inequality $\EE\left[ x^{2}\right] \ge \EE\left[ x\right]^{2} $. Eq.~\eqref{thirdPart3} is obtained after following similar steps with~\eqref{firstPart2}.
The second part of~\eqref{Expectationequation} is written as
\begin{align}
 \EE\left[ \sum_{i=1}^{K}\ l_{mi}^{-2}l_{mk}\right]&=  \left\{\begin{array}{ll}
                           \EE\left[ l_{mi}^{-1}\right]&~\mathrm{if}~i=k  \\
                           \sum_{i\ne k}^{K}\EE\left[  l_{mi}^{-2}l_{mk}^{}\right]&~\mathrm{if}~i\ne k \end{array}.\label{Expectationequation3}
\right. \end{align}
Now, the first branch for a general power $q$ becomes
\begin{align}
 \EE\left[  l_{mi}^{-q}\right]
 &\ge  \frac{1}{\EE\left[l_{mi}^{q}\right]},
\end{align}
where we have applied Jensen's inequality. Note that
\begin{align}
 \EE\left[l_{mi}^{q}\right]&=2\pi \left( \int_{0}^{1}{ y} \mathrm{d}y+\int_{1}^{ \infty} y^{-qa+1} \mathrm{d}y \right) \label{eq51} \\
 &=\frac{ q \al \pi}{ q \al-2 }. \label{eq6}
\end{align}
Regarding the second branch in~\eqref{Expectationequation3}, we have
\begin{align}
 \EE\left[  l_{mi}^{-2}l_{mk}^{}\right]&=\EE\left[  l_{mi}^{-2}\right] \EE\left[l_{mk}^{}\right]\\
 &\ge \EE\left[  l_{mi}^{-1}\right]^{2} \EE\left[ l_{mk}^{}\right]\label{ineq1} \\
 &\ge \frac{\EE\left[ l_{mk}^{}\right]}{\EE\left[  l_{mi}^{}\right]^{2}}\label{ineq2} \\
 &=\frac{1}{\EE\left[  l_{mi}^{}\right]}\\
 &=\frac{ \al-2 }{ \al\pi}\label{ineq3},
\end{align}
where we have applied a property of variance in~\eqref{ineq1}, and the Jensen's inequality in~\eqref{ineq2}. Next, in~\eqref{ineq3}, we have used~\eqref{eq6}. 
%
With respect to the second part of~\eqref{SINRFiniteproof161} and by following a similar procedure, we have
\begin{align}
 \!\!\mathcal{I}_{12}&= \frac{1}{{KN p_{\mathrm{d}}}}\EE\!\left[ \sum_{i=1}^{K}\!\left( \sum_{j=1}^{K}|\psi_{j}\psi_{k}^{\H}|^{2}l_{mj}+\frac{1}{{\tau_{\mathrm{tr}} \rho_{\mathrm{tr}}}} \!\right)\!l_{mi}^{-2}\right]\label{SINRFiniteproof165}\\
 &= \frac{1 }{\al\pi N p_{\mathrm{d}} }\left( \sum_{j=1}^{K}|\psi_{j}\psi_{k}^{\H}|^{2}{\left( \al-2 \right)}{ }+\frac{\al-1}{{\tau_{\mathrm{tr}} \rho_{\mathrm{tr}}}} \right)\!.\label{SINRFiniteproof166}
\end{align}
Substituting the results concerning  $\mathcal{I}_{11}$ and $\mathcal{I}_{12}$, we obtain $\mathcal{I}_{1}$. The second term in~\eqref{SINRFiniteproof13} becomes
\begin{align}
 &\!\!\!\!\mathcal{I}_{2}=\lim_{R \to \infty} \! \EE\Bigg[\frac{1}{{M^{2}}}{\displaystyle\sum_{i \ne k}^{K}U_{i}}{}\Bigg]\nn\\
 &\!\!\!\!=\lim_{R \to \infty} \EE_{M}\!\left[\!\EE_{|M} \!\!\left[\frac{1}{{M^{2}}}{\displaystyle\sum_{i \ne k}^{K}\!\!U_{i}}|M=\Phi\!\left( B\left( o,R \right)\right)\right]\right] \label{SINRFiniteproof17}\\
  &\!\!\!\!=\!\lim_{R \to \infty}\!\! \EE_{M}\!\!\left[\!\EE_{|M} \!\!\left[{\displaystyle\sum_{i \ne k}^{K}\!\!\left( \EE\left[ l_{mk}l_{mi}^{-1}\right]\right)^{\!2}}|M=\Phi\!\left( B\left( o,R \right)\right)\!\right]\!\right] \label{SINRFiniteproof18}\\
  &=\lim_{R \to \infty} \frac{1}{|B\left( o,R \right)|}\EE_{M}\left[ M\right] \displaystyle\sum_{i \ne k}^{K}\left( \EE\left[ l_{mk}l_{mi}^{-1}\right]\right)^{\!2} \label{SINRFiniteproof18}\\
  &= \lambda_{\mathrm{AP}}\displaystyle\sum_{i \ne k}^{K}\left( \EE\left[ l_{mk}l_{mi}^{-1}\right]\right)^{\!2} ,\label{SINRFiniteproof19}\\
  &= \lambda_{\mathrm{AP}}\left( K-1 \right)\!,\label{SINRFiniteproof191}
\end{align}
where $ U_{i}\displaystyle=\left( \sum_{m\in \Phi_{\mathrm{AP}}\cap B\left( o,R \right)}^{M}\!\!\!l_{mk}l_{mi}^{-1}\! \right)^{\!\!\!\!2} $. In \eqref{SINRFiniteproof17}, we have applied the law of large numbers,  and in~\eqref{SINRFiniteproof19} we have taken into account that  $\EE_{M}\left[ M\right]= \lambda_{\mathrm{AP}} |B\left( o,R \right)\!|$. In~\eqref{SINRFiniteproof191}, we have used similar steps to~\eqref{firstPart2}. Similarly, the third term in~\eqref{SINRFiniteproof13}  is obtained as 
\begin{align}
  \mathcal{I}_{3}&=\lim_{R \to \infty} \! \EE\Bigg[\frac{1}{M^{2}}\!\!\!\sum_{m\in \Phi_{\mathrm{AP}}\cap B\left( o,R \right)}^{M} \!\!\!\!\!\!\!\!d_{m}l_{mk}^{-1}\!\Bigg]\label{SINRFiniteproof20}\\
  &=\EE\left[ \left( \sum_{j=1}^{K}|\psi_{j}\psi_{k}^{\H}|^{2}l_{mj}+\frac{1}{{\tau_{\mathrm{tr}} \rho_{\mathrm{tr}}}} \right)l_{mk}^{-1}\right]\label{SINRFiniteproof167}\\
 &= \sum_{j=1}^{K}|\psi_{j}\psi_{k}^{\H}|^{2}+\frac{\al-2}{{\al \pi\tau_{\mathrm{tr}} \rho_{\mathrm{tr}}}},\label{SINRFiniteproof168}
\end{align}
where in \eqref{SINRFiniteproof167}, we have followed similar steps to \eqref{SINRFiniteproof15}.
Regarding the last term in~\eqref{SINRFiniteproof13}, we have 
\begin{align}
  \mathcal{I}_{4}&=\lim_{R \to \infty}\EE\left[ \frac{1}{MN}\right]\\
  &\ge \lim_{R \to \infty}\frac{1}{N\EE\left[M \right] }\label{SINRFiniteproof21}\\
   &=\lim_{R \to \infty}\frac{1}{N\lambda_{\mathrm{AP}} |B\left( o,R \right)|  }\label{SINRFiniteproof22}\\
   &=0,
\end{align}
where in~\eqref{SINRFiniteproof21} we have applied Jensen's inequality. Next, we have used that  $\EE_{M}\left[ M\right]= \lambda_{\mathrm{AP}} |B\left( o,R \right)\!|$, and we have computed the limit $R\to \infty$. Substituting $\mathcal{I}_{1}$, $\mathcal{I}_{2}$, $\mathcal{I}_{3}$, and $\mathcal{I}_{4}$ into~\eqref{SINRFiniteproof13}, we conclude the proof.

\section{Proof of Proposition~\ref{PropAPC}}\label{AreaPowerConsumptionLemmaproof}
The proof, split in two parts, starts with  the expression of $P_{\mathrm{TX}}$ by means of a lemma, and continues with the presentation of $P_{\mathrm{CPC}}$. 
\begin{lemma}\label{AreaPowerConsumptionLemma}
The total average  transmit power consumption due to UL pilot and DL data transmissions of an arbitrary AP is 
\begin{align}
P_{\mathrm{TX}}= K\frac{K/\zeta \rho_{\mathrm{tr}}+\tau_{\mathrm{d}} \rho_{\mathrm{d}}}{\tau_{\mathrm{c}}},\label{Ptx1} 
\end{align}
where $\tau_{\mathrm{d}}=\xi\left(\tau_\mathrm{c}-  \tau_{\mathrm{tr}}\right)$. 
\end{lemma}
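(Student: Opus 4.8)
The plan is to obtain the average transmit power as the total transmit energy expended over one coherence block, divided by the block length $\tau_{\mathrm{c}}$. I would decompose this energy into the UL pilot contribution, spent during the $\tau_{\mathrm{tr}}$ training samples, and the DL payload contribution, spent during the $\tau_{\mathrm{d}}$ data samples; the intermediate $\tau_{\mathrm{up}}$ UL-data samples are outside the scope of the quantities tracked here and contribute nothing. Because the AP process is stationary, every AP is statistically equivalent, so it suffices to account for the power associated with a single typical AP serving all $K$ users.

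First I would handle the pilot phase. From the training model in~\eqref{eq:Ypt2}, the $k$th user transmits the unit-norm sequence $\bpsi_{k}$ with $\|\bpsi_{k}\|^{2}=1$ and per-symbol power $\rho_{\mathrm{tr}}$, so that each user expends energy $\rho_{\mathrm{tr}}\tau_{\mathrm{tr}}$ over the training slot. Summing over the $K$ users gives a pilot energy of $K\rho_{\mathrm{tr}}\tau_{\mathrm{tr}}$, and invoking the definition of the reuse factor $\zeta=K/\tau_{\mathrm{tr}}$, i.e. $\tau_{\mathrm{tr}}=K/\zeta$, this becomes $K^{2}\rho_{\mathrm{tr}}/\zeta$.

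Next I would treat the DL payload phase by exploiting the per-AP power constraint $\mathbb{E}[\tfrac{\rho_{\mathrm{d}}}{K}\bs_{m}\bs_{m}^{\H}]=\rho_{\mathrm{d}}$, which fixes $\mu$ so that $\mathbb{E}[\|\bs_{m}\|^{2}]=K$. Since the signal physically radiated by the $m$th AP in~\eqref{signal1} is $\sqrt{\rho_{\mathrm{d}}}\,\bs_{m}$, its power equals $\rho_{\mathrm{d}}\,\mathbb{E}[\|\bs_{m}\|^{2}]=K\rho_{\mathrm{d}}$ (equivalently $\rho_{\mathrm{d}}$ per served user), and over the $\tau_{\mathrm{d}}$ payload samples this amounts to energy $K\rho_{\mathrm{d}}\tau_{\mathrm{d}}$. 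Adding the two contributions and normalizing by $\tau_{\mathrm{c}}$ then gives
\[
P_{\mathrm{TX}}=\frac{K^{2}\rho_{\mathrm{tr}}/\zeta+K\rho_{\mathrm{d}}\tau_{\mathrm{d}}}{\tau_{\mathrm{c}}}=K\,\frac{K/\zeta\,\rho_{\mathrm{tr}}+\tau_{\mathrm{d}}\rho_{\mathrm{d}}}{\tau_{\mathrm{c}}},
\]
which, recalling $\tau_{\mathrm{d}}=\xi(\tau_{\mathrm{c}}-\tau_{\mathrm{tr}})$, is exactly~\eqref{Ptx1}.

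The computation is essentially elementary energy bookkeeping, so the main obstacle is conceptual rather than technical: correctly extracting the DL power normalization from the constraint, reading $\bs_{m}\bs_{m}^{\H}$ as the scalar $\|\bs_{m}\|^{2}$ and hence concluding that the per-AP radiated DL power is $K\rho_{\mathrm{d}}$ rather than $\rho_{\mathrm{d}}$. A secondary subtlety is the per-AP attribution of the users' UL pilot energy, which I would state explicitly as the accounting convention inherited from the power-consumption framework of~\cite{Bjoernson2015a,Pizzo2018}, wherein each AP is charged with the pilot energy of all $K$ users it serves.
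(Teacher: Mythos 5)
Your proposal is correct and follows essentially the same route as the paper: the paper's proof is a one-line energy-bookkeeping argument, charging each of the $K$ users with pilot power $\rho_{\mathrm{tr}}$ for the fraction $\tau_{\mathrm{tr}}/\tau_{\mathrm{c}}$ of the block and the AP with DL power for the fraction $\tau_{\mathrm{d}}/\tau_{\mathrm{c}}$, then substituting $\tau_{\mathrm{tr}}=K/\zeta$. Your added step of extracting the per-AP radiated DL power $K\rho_{\mathrm{d}}$ from the normalization constraint $\mathbb{E}\bigl[\tfrac{\rho_{\mathrm{d}}}{K}\bs_{m}\bs_{m}^{\H}\bigr]=\rho_{\mathrm{d}}$ is a sound and slightly more explicit justification of the factor the paper leaves implicit, but it is not a different method.
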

\begin{proof}
In each coherence block, each user transmits pilot symbols for a fraction of $ \tau_{\mathrm{tr}}/ \tau_{\mathrm{c}}$ with power $ \rho_{\mathrm{tr}}$, 
 while each AP trasmits data symbols for a fraction of $ \tau_{\mathrm{d}}/ \tau_{\mathrm{c}}$ with power $\rho_{\mathrm{d}}$.
\end{proof}
The second part of~\eqref{APC1}, concerning the  $P_{\mathrm{CPC}}$ of an arbitrary AP, is given by~\cite{Pizzo2018}
\begin{align}
P_{\mathrm{CPC}}=P_{\mathrm{FP}}+P_{\mathrm{TC}}+P_{\mathrm{C-BC}}+P_{\mathrm{CE}}+P_{\mathrm{LP}},\label{Pcpc} 
\end{align}
where these terms correspond to the power consumptions of circuitry parts. Specifically,  $P_{\mathrm{FP}}$ expresses the power consumed for site-cooling and  control signaling  and the traffic-independent mixed power consumption of each backhaul, $P_{\mathrm{TC}}$ for the transceiver chain, $P_{\mathrm{C-BC}}$ for coding and load-dependent backhauling cost, while $P_{\mathrm{CE}}$ and $P_{\mathrm{LP}}$ decribe the powers consumed for the processes of channel estimation process and linear processing. Actually, each term depends on the system parameters. Especially, we have that $P_{\mathrm{TC}}=N P_{\mathrm{AP}}+P_{\mathrm{LO}}+K P_{\mathrm{UE}}$, where  $P_{\mathrm{AP}}$, $P_{\mathrm{LO}}$, and $P_{\mathrm{UE}}$
are the powers per AP antenna, AP local oscillator (LO), and the power per user antenna. Moreover, we  have $P_{\mathrm{C-BC}}=B_{\mathrm{w}} \mathrm{ASE} \left( P_{\mathrm{COD}} + P_{\mathrm{DEC}} +P_{\mathrm{BT}} \right) $, where the terms from left to right denote the bandwidth, the powers for data coding  and decoding as well as well as the total power for the backhaul traffic. Regarding the computation of $P_{\mathrm{CE}}$, we have that the MMSE estimation involves $N \tau_{\mathrm{d}}$ and $N$ operations for the calculations of $\bpsi_{k}^{\H}\tilde{\by}_{m}^{\mathrm{tr}}$ and $ \hat{\bh}_{mk}$ in~\eqref{eq:Ypt3} and~\eqref{estimatedChannel1}, respectively. In total, the MMSE estimation requires $K N(\tau_{\mathrm{tr}}  +1)$ operations needing 3 flops per operation with AP computational efficiency $\al_{\mathrm{eff}}$. Given that this procedure takes $\frac{B_{\mathrm{w}}}{\tau_{\mathrm{c}}}$ coherence blocks per second  and $\tau_{\mathrm{tr}}= \frac{K}{\zeta}$, we have
\begin{align}
 P_{\mathrm{CE}}=\frac{3}{L_{\mathrm{AP}}}\frac{B_{\mathrm{w}}}{\tau_{\mathrm{c}}}K N ( \frac{K}{\zeta}  +1).
\end{align}
The linear processing power $P_{\mathrm{LP}}$ is a result of the powers consumed by precoding/transmitting the data and   computation of the precoder, i.e., $P_{\mathrm{LP_{\mathrm{t}}}}$ and $P_{\mathrm{LP_{\mathrm{p}}}}$, respectively. Hence, we have 
\begin{align}
 P_{\mathrm{LP}}=P_{\mathrm{LP_{\mathrm{t}}}}+P_{\mathrm{LP_{\mathrm{p}}}},
\end{align}
where $P_{\mathrm{LP_{\mathrm{t}}}}=\frac{3}{L_{\mathrm{AP}}}\frac{B_{\mathrm{w}}}{\tau_{\mathrm{c}}}K N \xi  (\tau_{\mathrm{c}}  - \tau_{\mathrm{tr}})$ with $\tau_{\mathrm{tr}}=\frac{K}{\zeta}$,  and the power consumed by the conjugate beamformer is given by~\cite{massivemimobook,Pizzo2018} as $P_{\mathrm{LP_{\mathrm{p}}}}=\frac{B_{\mathrm{w} }K}{7\tau_{\mathrm{c}} L_{\mathrm{AP}}}$. Substituting~\eqref{Ptx1} and the power expressions in~\eqref{Pcpc} into~\eqref{APC1}, we conclude the proof.
\end{appendices}

\bibliographystyle{IEEEtran}

\bibliography{mybib}
\end{document}